\newtheorem{theorem}{\textbf{Theorem}}
\newtheorem*{proof}{\textbf{Proof}}
\DeclareMathOperator*{\argmax}{arg\,max}
\DeclareMathOperator*{\argmin}{arg\,min}
\algnewcommand{\algorithmicgoto}{\textbf{go to}}%
\algnewcommand{\Goto}[1]{\algorithmicgoto~\ref{#1}}%
\theoremstyle{definition}
\newtheorem{definition}{Definition}
\newcommand{\ds}[1]{\textcolor{black}{#1}}
\newcommand{\yy}[1]{\textcolor{black}{#1}}
\begin{document}

\title[Game Theory Interference Mitigation  for Time-Division MAC in WBAN]{\ds{Robust Wireless Body Area Networks Coexistence: A Game Theoretic Approach to Time-Division MAC}}

\author{Yizhou Yang}
\orcid{1234-5678-9012-3456}
\affiliation{%
  \institution{Australian National University}
  \city{Canberra, ACT}
  \country{Australia}
  \postcode{2601}}
\email{yizhou.yang@anu.edu.au}
\author{David B.Smith}
\affiliation{%
  \institution{CSIRO, Data61}
  \city{Sydney, NSW}
  \country{Australia}
}
\email{david.smith@data61.csiro.au}

\begin{abstract}
\ds{The enabling of} wireless body area networks (WBANs) \ds{coexistence by radio interference mitigation} is \ds{very} important due to \ds{a} rapid growth \ds{in} potential users, and \ds{a} lack of \ds{a} central coordinator among WBANs \ds{that are closely located}. In this paper, we propose a TDMA based MAC layer Scheme, \ds{with a} back-off mechanism that reduce\ds{s packet} collision probability; \ds{and estimate performance using a} Markov chain model. Based on the \ds{MAC layer} scheme, a novel non-cooperative game \ds{is proposed} to \ds{jointly} adjust sensor node's transmit power and rate. \ds{In comparison with the state-of-art, simulation that includes empirical data} shows that the proposed approach leads to higher throughput and longer node lifespan \ds{as} WBAN wearers dynamically mov\ds{e into each other's} vicinity. Moreover, by adaptively tuning contention windows size \ds{an alternative game} is developed, which significantly reduce\ds{s} the latency. Both \ds{proposed} games provide robust transmission under strong inter-WBAN interferences, but \ds{are demonstrated to be applicable to} different scenarios. The uniqueness and existence of Nash Equilibrium (NE)\ds{, as well as close-to-optimum social efficiency, is} also proven for both games.
\end{abstract}

%
%
\begin{CCSXML}
<ccs2012>
 <concept>
  <concept_id>10010520.10010553.10010562</concept_id>
  <concept_desc>Computer systems organization~Embedded systems</concept_desc>
  <concept_significance>500</concept_significance>
 </concept>
 <concept>
  <concept_id>10010520.10010575.10010755</concept_id>
  <concept_desc>Computer systems organization~Redundancy</concept_desc>
  <concept_significance>300</concept_significance>
 </concept>
 <concept>
  <concept_id>10010520.10010553.10010554</concept_id>
  <concept_desc>Computer systems organization~Robotics</concept_desc>
  <concept_significance>100</concept_significance>
 </concept>
 <concept>
  <concept_id>10003033.10003083.10003095</concept_id>
  <concept_desc>Networks~Network reliability</concept_desc>
  <concept_significance>100</concept_significance>
 </concept>
</ccs2012>
\end{CCSXML}

\ccsdesc[500]{Computer systems organization~Embedded systems}
\ccsdesc[300]{Computer systems organization~Redundancy}
\ccsdesc{Computer systems organization~Robotics}
\ccsdesc[100]{Networks~Network reliability}

%
%

\keywords{Wireless body area networks, media access control,
interference mitigation, game theory, power control, time synchronization}

\maketitle

\renewcommand{\shortauthors}{Y. Yang, D. Smith}

\section{Introduction}
%
%
%
%


W\ds{ireless body area networks (WBANs) are an integral part of} affordable, flexible and proactive wearable health-care to reduce cost\ds{s} and improve people's quality of life. Recent advances in wireless communications and sensor hardware \ds{mean that} WBAN\ds{s} are feasible implementations. The IEEE 802.15.6 standard approved in 2012 for wireless communication in WBANs aims to serve a variety of medical, entertain\ds{ment}, military and consumer electronics applications. \ds{IEEE 802.15.6 only outlined} basic requirements \ds{with a choice of} multiple MAC \ds{layer} techniques \ds{--- i.e., scheduled access, polling, and contention access ---} supported in a beacon-based superframe. However, with \ds{a} rapid increase in active devices, which predicted to be \ds{well over a billion} by 2018 \cite{business2013}, WBANs \ds{will} suffer unavoidable inter-WBAN interference from \ds{closely-located} coexisting WBANs \ds{due to no central coordinator amongst networks.}

Moreover, \ds{WBAN radios have limited battery capacity due to sensor/actuator devices small sizes}, but health monitoring applications require long battery life-time\ds{, as removing, charging and replacing batteries can be very inconvenient and difficult}.  Because inter-WBAN interference can cause performance degradation and energy wastage of low-power sensor nodes\ds{, and sensors radio transmit power is strictly constrained, inter-WBAN interference is a major issue, where most energy wastage occurs in the wireless transceiver}. Hence, a well-designed MAC layer protocol \ds{for transmission scheduling} is of paramount importance to prolong network lifetime and improve the robustness of WBAN communications by reducing \ds{periods of interference. I}nterference mitigation schemes have been widely stud\ds{ied} for other networks, such as traditional cellular networks and wireless sensor networks, \ds{but such schemes} can not be directly implemented \ds{in} WBANs. \ds{This is} because WBAN\ds{s} have relatively high mobility\ds{, compared with other networks where gateway devices are typically stationary,  leading} to unique features \ds{of practical} WBAN \ds{coexistence that require new approaches specifically designed for WBAN}s.

Existing literatures on MAC protocols for WBANs demonstrate that CSMA/CA protocols encounter unreliable CCA issues and heavy collision \yy{\cite{5304962}}. On the other hand, TDMA has proven to be more reliable and power efficient\cite{zhang2010performance}. \emph{Therefore, here, we propose a \ds{game-theoretic formulation of a} TDMA-based MAC protocol \ds{to} achieve energy efficiency, reduce inter-BAN interference, improve overall throughput and \ds{reduce} latency \ds{across all} co-existing WBANs.} The proposed method \ds{adapts} to the time-varying channel and traffic by optimizing the transmission schedule. Depending on the special random back-off mechanism to minimize the probability of packets collision among sensors in different \yy{W}BANs, the overall \yy{interference} level can be consequently reduced. Besides, \yy{each WBANs is treated as an active player in a non-cooperative game}.
\yy{In each superframe}, the transmission parameters, such as, transmit power, transmit probability and data rate, \yy{are determined based on a utility function, which admits a unique Nash Equilibrium}. By maximizing the utility function, a higher throughput can be achieved, \ds{and} the latency and power consumption \ds{is} reduced.

\ds{Hence, t}he main contributions of this paper are:
\begin{itemize}
\item[--] A novel MAC layer timing protocol to reduce inter-BAN interference by adapting back-off in TDMA.
\item[--] A Markov chain is constructed to provide performance evaluation.
\item[--] A non-cooperative game is proposed to jointly tune transmit power and data rate to improve throughput and reduce latency.
\item[--] Based on the Markov chain, \yy{an Adaptive Backoff Game} is proposed for better Quality of Service (QoS) performance.
\item[--] The two games \ds{are demonstrated to} reduce \ds{radio} interference by improving throughput\ds{, in conjunction with reduced power and delay.}
\end{itemize}

The proposed method \ds{has two principal features}: (i) a novel MAC layer protocol and (ii) game theor\ds{etic} power control. The MAC layer protocol \ds{focuses} on rescheduling unsuccessfully transmitted packets to reduce the probability of packet collision among different co-located WBANs. \ds{T}wo games \ds{are proposed for power control}: \ds{a} rate-and-power game and \ds{a} Adaptive Backoff Game \ds{(as an extension of the Link Adaptation Game)}. The Link Adaptation Game tuning the node's transmit power and data rate from the Nash equilibrium of its utility function to obtain optimized throughput (in terms of Packer Delivery Ratio, $\mathrm{PDR}$) and power consumption. In the Adaptive Backoff Game, the sensor node adjusts its transmit probability by dynamically chang\ds{ing} the contention parameters (contention window size), to further improve throughput performance and minimize transmission delay. \ds{Due to} the difference in delay performance and power consumption, \ds{a tradeoff of} the two games can be \ds{made according to WBAN application.}

The rest of this paper is organized as follows. The related literature will be reviewed in Section \ref{Sec:2}. The proposed MAC layer scheme, as well as the analytical model, will be described in detail in Section \ref{Sec:3}. The two proposed game theory methods will be depicted in Section \ref{Sec:4} and Section \ref{Sec:5} respectively. The performance of the proposed methods will be illustrated in Section \ref{Sec:6}.

\section{Related Work}\label{Sec:2}
In literature, many studies have been proposed to mitigate inter-WBAN \ds{radio} interference, \ds{in three main categories}:
\begin{itemize}
\item[--] Transmit power control
\item[--] MAC layer scheduling
\item[--] Data rate and power control
\end{itemize}

\subsection{Transmit Power Control}
Several pioneer\ds{ing works o}n inter-WBAN interference mitigation focus on \ds{solutions at the physical layer}. Many techniques involve transmission power control that is based on a centralized and partially distributed \cite{zander1992distributed} approach. These techniques are proven to be effective for networks with \ds{stable topologies and} fewer power constraints\cite{lee1996fully}. However, more recently, \ds{g}ame-theoretic power control, incorporating pricing factors in utility functions, e.g., \cite{zou2014bayesian}, \cite{kazemi2010inter}, has been shown to improve QoS in wireless networks. Due to the general lack of a central coordinator in WBANs, transmit power control must be adopted in a distributed manner across co-located WBANs. In recent studies, WBANs have been modeled as rational players competing for resources in non-cooperative power-control games, \cite{dong2015socially} \cite{7997347}.
\subsection{MAC layer}
Recently, several MAC layer protocols that seek to solve the inter-WBAN interference problems \cite{cheng2013coloring} \cite{movassaghi2014cooperative} \cite{movassaghi2014smart} have been proposed. The work \ds{in} \cite{cheng2013coloring} us\ds{es} cooperative schemes to suppress the inter-WBAN interference, where a random incomplete coloring (RIC) algorithm \ds{is} proposed to realize a fast and high spatial-reuse for inter-WBAN scheduling. In \cite{movassaghi2014cooperative}, a mixed graph coloring is used for interference mitigation among WBANs, \ds{where} the proposed method pairs every two WBANs into a cluster and uses cooperative scheduling between the pairs in each cluster to reduce interference.  Node-level scheduling is considered in \cite{movassaghi2014smart} to increase spatial reuse. However, these methods only work for \ds{fixed topologies in the network}. To better cope with \ds{with the required flexibility in} WBAN \ds{implementations}, in \cite{Cheng2011}\cite{grassi2012b2irs},\cite{7299349}, collision avoidance technique\ds{s}, such as beacon rescheduling, channel sensing and adaptive sleeping are used to improve the overall QoS performance for interfering WBANs. Many energy efficient MAC protocols have been proposed, such as \cite{6883350}, \cite{jamthe2014scheduling}. In B-MAC \cite{polastre2004versatile}, the sender \ds{needs to} broadcast a long preamble to be detected by the right receiver to reduce power consumption, but \ds{this}, however, incurs unnecessary \ds{transmission overhead.}

\subsection{Data \yy{R}ate and Power Control}
Comparing with transmission power control schemes, the resource allocation methods tuning some other parameters, such as transmission rate, the packet size and so on, \ds{have proven to be more effective}. Research interests in this area are emerging and some centralized methods  \cite{musku2010game} have been proposed in recent studies. As for distributed algorithms, game theoretic approaches are widely used. For cellular networks,  \cite{hayajneh2004distributed} game theor\ds{etic} schemes with multiple discrete code rates or modulation schemes are proposed, which is also known as link adaptation \cite{ginde2008game}. The mobile terminals updating power and rate by optimizing the Utility Function to Nash Equilibrium. This idea is further extended for wireless Ad hoc network in \cite{6178441}, a simple utility function only depending on Signal to Interference and Noise Ratio (SINR) and the pricing is used. However, for WBANs, there has been limited literature on this subject. \cite{babaei2010transmission} proposed transmission rate adaptation policy for WBANs to improve the QoS by solving a convex optimization problem, but only dynamic postures in WBANs channels are considered.

\subsection{Overall IEEE 802.15.6 Requirements}\label{802.15.6}
As per the IEEE 802.15.6 standard for WBANs its requirements are as follows \cite{ieee15.6}:
\begin{itemize}
\item Bit rates in the range of \SI[per-mode=symbol,per-symbol = p]{10}{\kilo\byte\per\second} to  \SI[per-mode=symbol,per-symbol = p]{10}{\mega\byte\per\second} should be supported via the WBAN links.
\item Packet Delivery Rate ($\mathrm{PDR}$) should be larger than 90\% for a 256 octet payload for more than 95\% of the best-performing links;
\item Up to 256 nodes should be supported by each WBAN;
\item Reliability, jitter and latency should be supported for specific WBAN applications. For instance, medical applications and non-medical of WBANs require latency to be less than \SI{125}{\milli\second} and less than \SI{250}{\milli\second}, respectively; whilst jitter should be less than \SI{50}{\milli\second};
\item WBAN nodes should allow reliable communication in case of mobility scenarios for both on-body and in-body communications;
\item Up to 10 randomly distributed co-located WBAN networks should be supported in a $6\times 6 \si{\metre\squared}$ area.
\end{itemize}
\section{Overall of the Proposed MAC Scheme}\label{Sec:3}

\begin{table}[!htb]
\centering
\caption{Table of Notations }
\begin{tabular}{|c|c|}
\hline
   Symbols & Description\\ \hline
 $N$ & Number of WBANs  		  \\ \hline
 $T_s$ & Length of the superframe   \\ \hline
 $T_b$ & Length of the beacon \\ \hline
 $T_{idle}$ & Length of the inactive period  \\ \hline
 $T_{payload}$ & Length of the transmission period (payload)  \\ \hline
 $T_{ack}$ & Length of the ACK   \\ \hline
 $\yy{T_{\text{slot}}}$ & Length of a time slot  \\ \hline

 $w$ & Backoff counter   \\ \hline
 $m$ & Maximum Backoff Stage   \\ \hline
 $\lambda$ & Persistence coefficient   \\ \hline
 $b$ & Backoff Stage   \\ \hline

 $W$ & Contention window size   \\ \hline
 $CW_{min}$ & Minimum backoff length   \\ \hline
 $CW_{max}$ & Maximum backoff length   \\ \hline

 $p_{f}$ & Probability of transmission failure   \\ \hline
 $b_{i,j}$ & Stationary distribution   \\ \hline
 $\tau$ & Probability of transmission in a randomly slot   \\ \hline
 $p_{c}$ & Probability of packets collision   \\ \hline
 $R_{\tau}$ & Normalization coefficient of $\tau$   \\ \hline

 $\gamma$ & Signal to Noise and Interference ratio(SINR)   \\ \hline
 $h_{ii}$ & Intra-WBAN Channel gain   \\ \hline
 $h_{ij}$ & Inter-WBAN Channel gain   \\ \hline
 $\sigma$ & Noise gain   \\ \hline

 $P_{i}$ & Transmit power of WBAN $i$  \\ \hline
 $R_{i}$ & Data rate of WBAN $i$  \\ \hline

 $\Omega$ & Social Welfare \\ \hline

\end{tabular}

\label{tab:BPSKTable}
\end{table}


Here, we propose a TDMA-based MAC protocol to achieve energy efficiency, reduce inter-\yy{W}BAN interference and obtain low delivery latency for co-existing WBANs. Depending on the special random back-off mechanism to minimize the probability of packets collision among sensors in different WBANs, the overall \ds{level of interference} and power consumption can be consequently reduced.

\subsection{System Model}

We consider star-topology WBANs that \ds{are closely located and coexisting}. Each WBAN consists of a single hub with multiple sensors as described in Fig \ref{fig:sys}. The hub \ds{is not energy-constrained}, so the \ds{energy} consumption of the sensor \ds{is mainly considered}. The hub \ds{facilitates the main WBAN operations} such as synchronization, re-transmission and determining \ds{transmission schedules}.


\begin{figure}[htb!]
  \includegraphics[width=\textwidth,height=8cm]{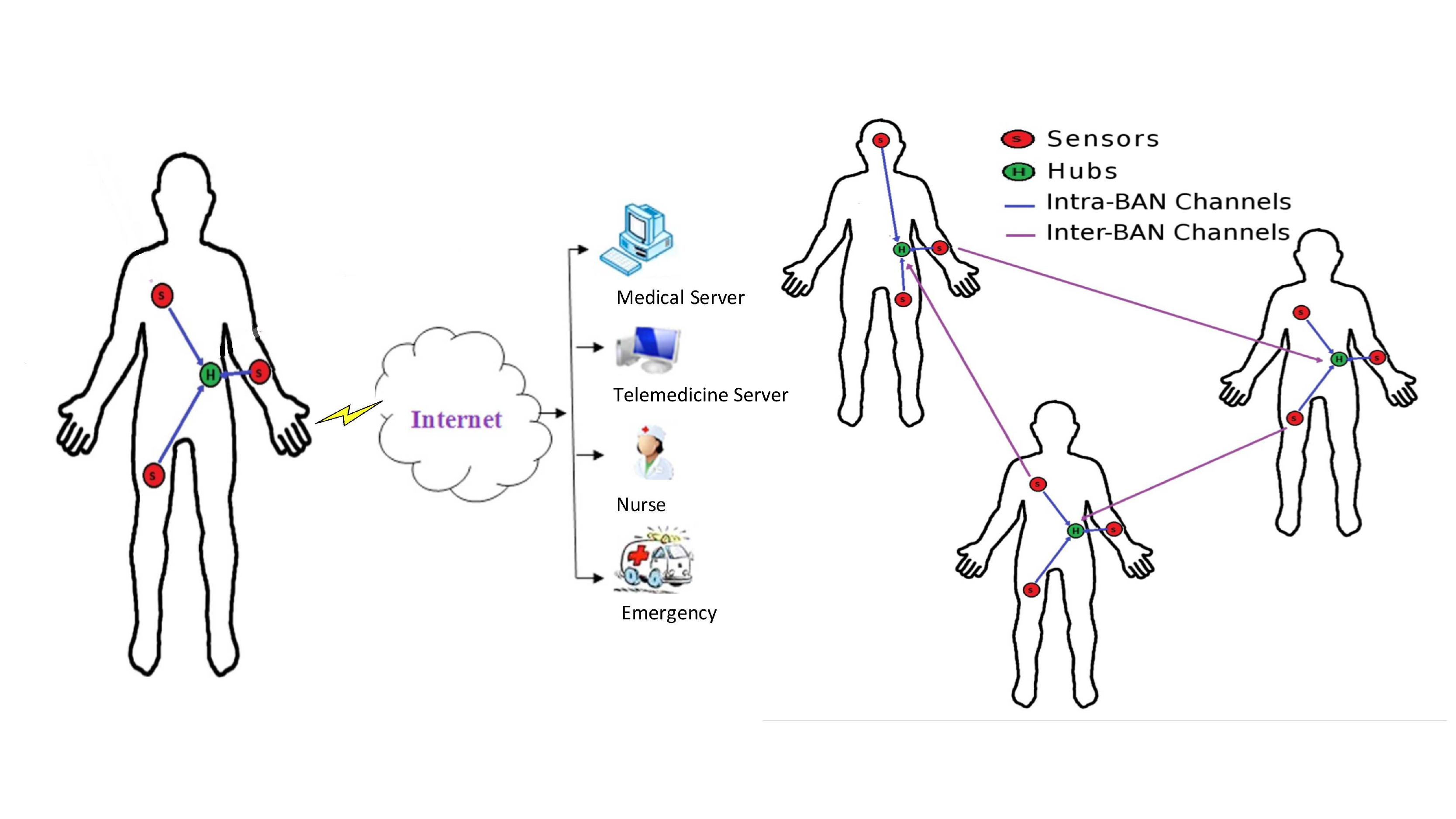}
  \caption{System Model}
  \label{fig:sys}
\end{figure}

As interference \ds{across} all co-existing WBANs \ds{is the main reason for dropped-packets and energy wastage, the following assumptions apply}:

A1. All BANs are \ds{saturated}, which means that they always have a packet to send (packer arrive rate is 1).

A2. Interference-dominated, where packet loss due to \ds{additive noise is} negligible.

A3. \ds{All hubs are not energy constrained}

A4.  \ds{T}ime-slots \ds{across every} superframe \ds{are normalized to unity}

Within each WBAN the sensor devices acquire data and \ds{use TDMA to transmit to the hub} to avoid idle-listening and overhearing. The intra-BAN interference is collision free when each sensor is transmitting using round-robin scheduling. \ds{However}, it is infeasible to implement central coordinator among co-existing WBANs \ds{that are closely located}. Thus, the transmission between different WBAN\ds{s} can not be synchronized. Therefore, co-channel interference may arise due to collisions among\ds{st} concurrent transmissions made by sensors in different WBANs.

\subsection{\ds{MAC} Layer Specification}

\subsubsection{Superframe Structure}
As described above, the sensor\ds{s} in each WBAN are synchronized by periodic transmission of the superframe, with constant length $T_s$. Each superframe consists of a beacon, ACK reception, several slots used for data transmission and \ds{an} inactive (idle) period. The beacon \ds{is} configured \ds{to contain} control information for broadcasting to sensors. The turnaround time is negligible. After the reserved transmission slots followed by ACK reception, the remaining slots in the superframe are considered as \ds{an} inactive or idle period. The basic structure of the superframe is depicted in Figure \ref{fig:MAC}.

Each superframe starts with broadcasting a \ds{b}eacon packet from the hub to the sensors, which consists of information for establishing links and synchronization. Due to Assumption A.3, the probability that beacon or ACK \ds{is not received at the sensor} is negligible, as the hub can apply higher transmit power to avoid SINR outage. The beacon is immediately followed a data frame, which mainly consist of up-link data traffic from the sensors to the hub.

\begin{figure}
\centering
\includegraphics[trim=6cm 6cm 6cm 6cm, clip=true,width=1\columnwidth]{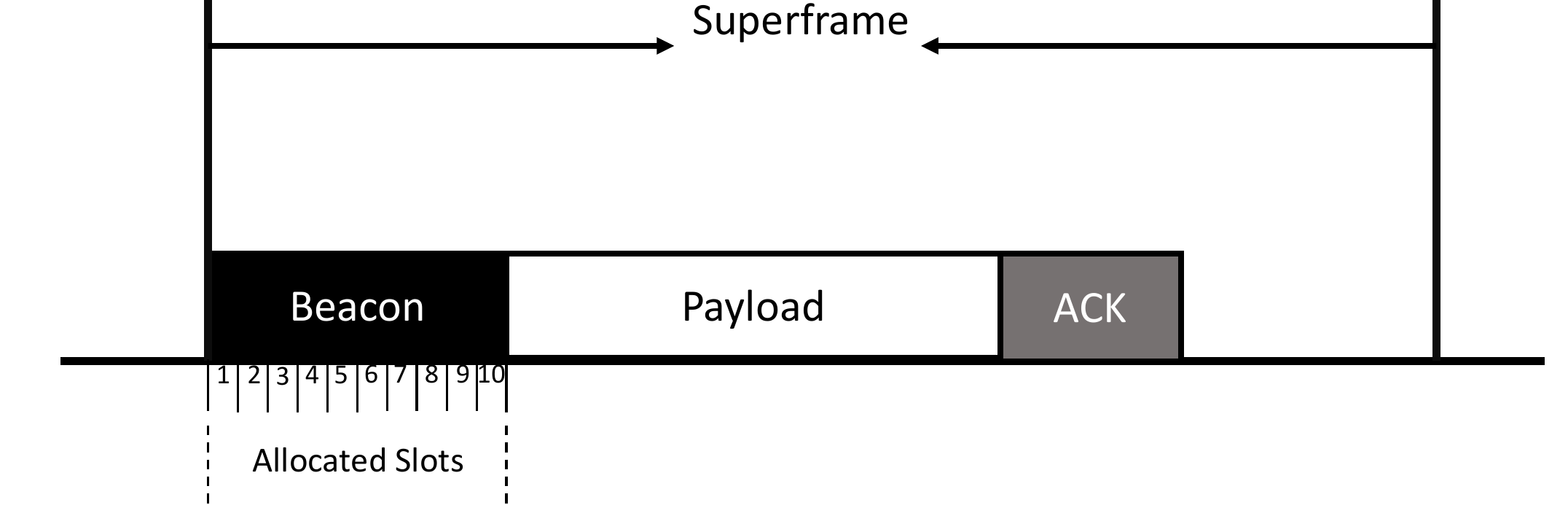}
\caption{Superframe Structure}
 \label{fig:MAC}
\end{figure}

\subsubsection{Back-off Mechanism}

In order to avoid collisions when many WBANs sharing the medium, \ds{a} back-off algorithm can \ds{be} used in WBAN networks. Since, if a transmission \ds{fails due to} interference, it is highly likely that \ds{a following transmission will also be} blocked. Keeping the sensor in back-off states for a shorter period of time will increase packet  delivery ratio and reduce collision probability.

Fig.\ref{fig:Time} illustrate\ds{s} the back-off mechanism and the operation mode of the sensor. As depicted in Fig.\ref{fig:sys}, all the sensors may be in one of three different operation states: back-off state, transmission state, and inactive state.

\begin{figure}[!htb]
\centering
\includegraphics[trim=0cm 4cm 1cm 4cm,clip=true,width=1\columnwidth,height=6cm]{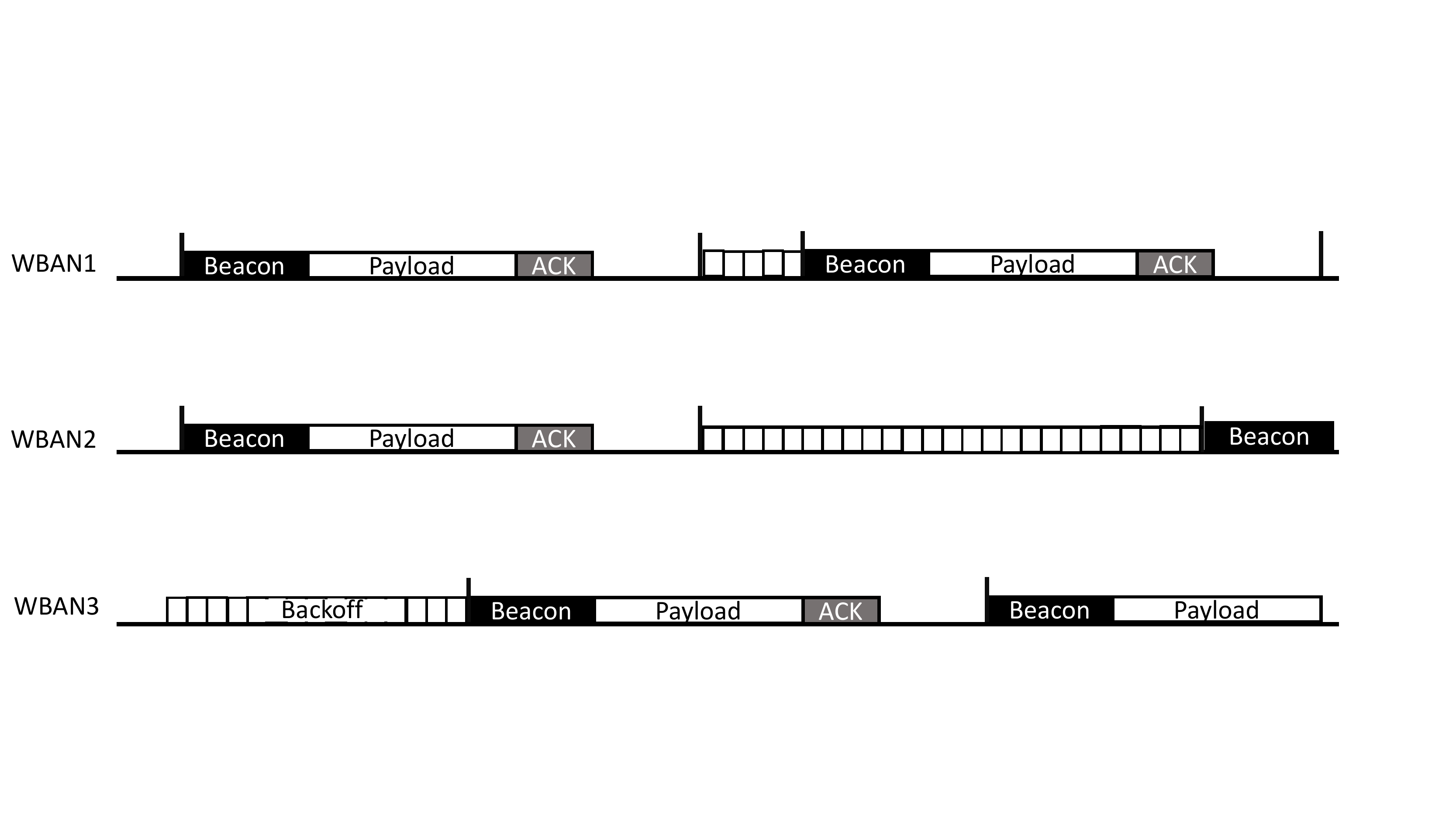}
\caption{Timing Scheme and Operations}
 \label{fig:Time}
\end{figure}

If the ACK is not received by the sensor till the end of a superframe, the sensor executes a random backoff procedure before the next beacon is transmitted by the hub to reduce the collision probabilities among WBANs.

\yy{During the random backoff procedure, each sensor set its backoff counter $w$ as \ds{a} random integer uniformly distributed over the interval of $w\in[1,W]$.} The value $W$, \yy{namely Contention Window size is determined} by the Backoff stage $b$ and the number of Maximum Backoff Stages $m$. \yy{The Backoff stage $b$ equals to} the number of transmissions failed for the packet. At the first transmission attempt,  $W$ is set \yy{as} 0.  After each unsuccessful transmission,  $W$ is set as the $CW_{min}$ multiplied by the persistence coefficient $\lambda$ $(W = \lambda^b CW_{min} )$, until it reach the maximum value $CW_{max} = \lambda^m CW_{min} $ (in this paper $\lambda=2$). The value $CW_{min}$ called minimum back-off length.
The basic algorithms are listed in Algorithm \ref{alg:1}\yy{.}

\begin{algorithm}
\caption{The Proposed MAC Layer Protocol}\label{euclid}
\begin{algorithmic}[1]

\State Initializing MAC Parameters when, $W = CW_{min}$ , $b=0$ ,$w= 0$
\State Hub send the sensor a beacon to sensor $i$ with synchronization information  \label{marker}

\State After receive the beacon, the sensor $i$ transmit data using the scheduled slots.

\If {Hub successfully receives the packet }
\EndIf
\State Sensor keeps inactive for a period of $T_{idle}$ until the end of the superframe.
\State \Goto{marker}
\State $i \gets i+1$

\State \textbf{close};

\If {The transmission fails because of the interference, the sensor will not receive the ACK}
\State $b \gets b+1$
\If {$b > MaxRetransmissionLimit$}
\State  Discard The Packet
\State \Goto{marker}
\EndIf

\State \textbf{close};
\State The hub calculate\ds{s} the backoff length
\State $W_b = \lambda^b CW_{min}$
\State $w= random\{0, W_b\}$

\State Sensor keeps inactive for a period of $T_{idle}$, and start\ds{s} back off until the end of the superframe.

\State $w \gets w-1$
\If {$w = 0$}
\State \Goto{marker}

\State \textbf{close};

\EndIf
\EndIf

\end{algorithmic}
\label{alg:1}
\end{algorithm}


\subsection{Markov Model}

\ds{A} Markov model \ds{was} initially proposed by Bianchi for IEEE 802.11 DCF \cite{840210}. The model describes the basic fundamental process of MAC layer scheme through a Markov chain. The model has been extended in several directions.

In this section, we proposed a discrete-time Markov chain, which models the operation of the proposed algorithm in the tagged WBAN and captures the key characteristics of the MAC layer timing scheme, such as, superframe structure, and re-transmission \yy{mechanism}. The Markov chain model can help us investigate \ds{features} of the proposed MAC layer timing scheme, such as throughput \yy{and} delay.

Figure \ref{Markov}. shows the state transition diagram of the Markov chain of the proposed MAC scheme. In the Markov model, the state at time $t$ for tagged WBAN is represented by the stochastic process $( b(t),w(t) )$.  $b(t),w(t)$ represent the back-off stage and the back-off counter respectively. Here, $b(t) \in [0, m]$ represents the back-off stage of the tagged WBAN at time $t$, where $m$ is the maximum backoffs. $w(t)$ represents the value of the backoff counter at time $t$.

\begin{figure}
\centering

\includegraphics[width=9cm,height=9cm]{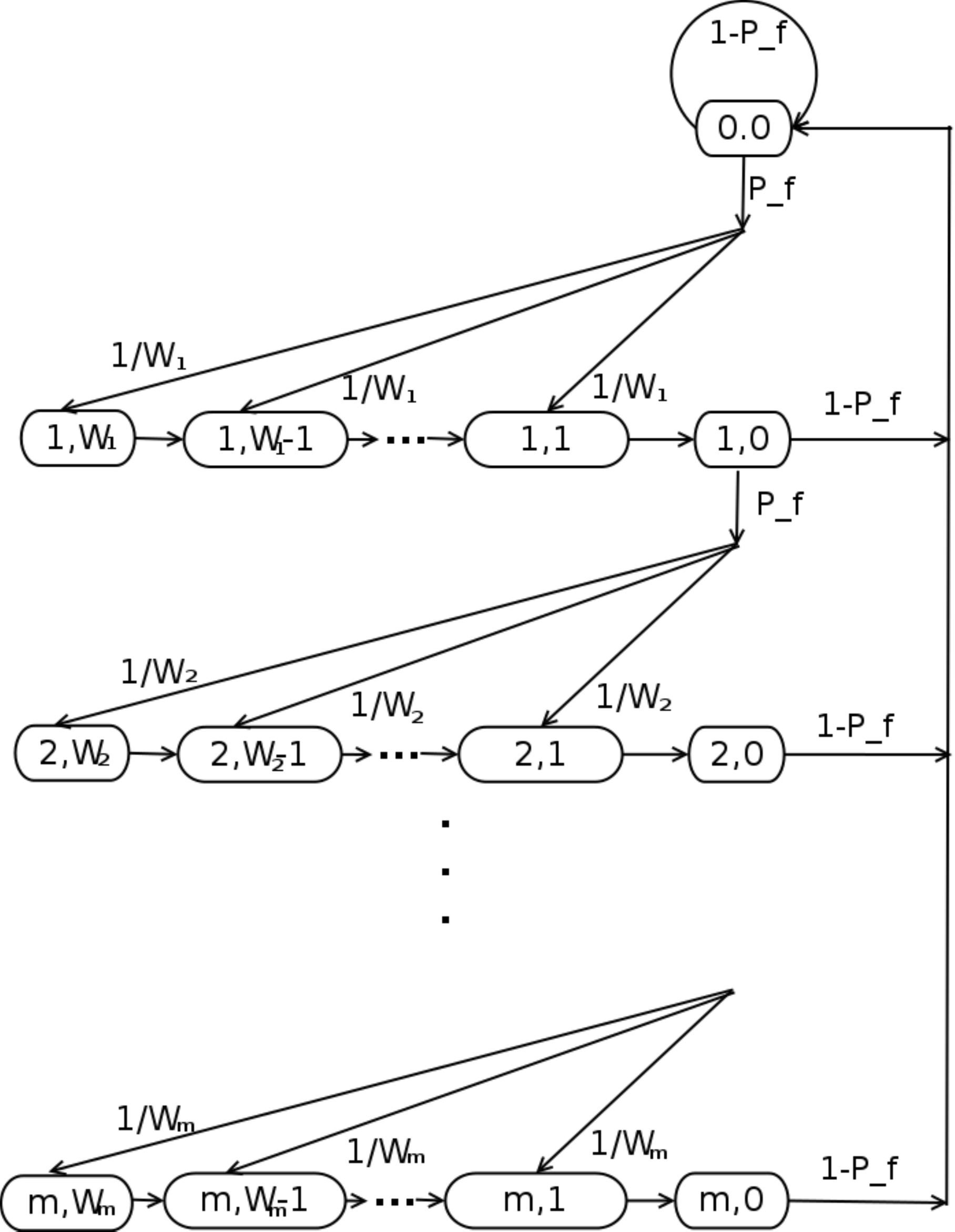}
  \caption{Markov Chain for Proposed MAC Scheme} 
  \label{Markov}
\end{figure}

Initially, the value\ds{s} of back-off counter and re-transmission counter are set to zero. In the back-off stage, when the value of back-off counter reache\ds{s} zero, the hub will re-transmit the packet. If the transmission attempt is successful, the state will move to $\{0,0\},$ \ds{($b$($t$)= 0, $w$($t$) = 0)}, and the WBAN starts to transmit \ds{in the} next frame after the in-active period. Otherwise, the sensor moves from back-off stage $j$ to $j+1$ with its re-transmission counter increment\ds{ed} by 1. In addition, at the state $\{m,0\}$, the data from the sensor will either be successfully transmitted or discard\ds{ed} by the sensor. The parameter $p_f$ represents the probability of transmission failure due to the inter-BAN interference, which denotes the probability of ACK reception under collision \ds{of packets}. The \ds{state transition probabilities are}:

\begin{equation}
\begin{aligned}
\begin{cases}
P_r((i,j-1)|(i,j)) & = 1\\
P_r((i+1,j)|(i,\yy{0})) & = p_f \frac{1}{W_i}\\
P_r((0,0)|(i,0)) & = 1 - p_f \\
P_r((0,0)|(0,0)) & = 1 - p_f
\end{cases} ,
\end{aligned} 
\end{equation}
where $W_i = 2^{i} CW_{min}, i \geq 1$.

\subsubsection{\yy{Steady-State Solution}}
To complete the construction of the Markov chain in Fig. 1, \yy{a form solution} for the Markov chain \yy{is required}.

Let the stationary distribution of the Markov chain be, $b_{i,j}=\lim_{t\to\infty} Pr(s(t)=i,b(t)=j)$, where $ i\in (0,m), j\in(0,W_i)$. Then, we can calculate the stationary distribution for all the values of $b_{i,j}$. By using the equation (1),  $b_{i,j}$ can be expressed as functions of the value  $b_{0,0}$ and of the transmission failure probability $p_f$. $b_{0,0}$ is finally determined by imposing the normalization condition:

\begin{equation}
\begin{aligned}
b_{0,0} + \sum_{k=1}^{m} \sum_{j=0}^{W_k} b_{k,j} = 1
\end{aligned}
 \label{equation:Pro}
\end{equation}

that simplifies \ds{to}:

\begin{equation}
\begin{aligned}
b_{0,0} = \frac{2(1-p_f)(1-2p_f)}{2CW_{min}p_f(1-(2p_f)^m)(1-p_f)+(2+p_f)(1-p_f^m)(1-2p_f)}
\end{aligned}
 \label{equation:Pro}
\end{equation}

We can now express the probability  $\tau$ that a WBAN sensor attempts to carry out a superframe transmission in a randomly chosen slot time. The transmission occurs when the backoff time counter is equal to zero. Thus, we can write the probability $\tau$:

\begin{align}
 \tau= &\sum_{k=0}^{m} b_{k,0}\\
     = & \frac{2(1-2p_f)(1-p_f^{m+1})}{2 CW_{min} p_f(2-(2p_f)^{m})(1-p_f) + (1-2p_f)(2+p_f)(1-p_f^m)}
\label{tau}
\end{align}


With probability $\tau$, the probability of transmission failure $p_f$ is the intersection of the probability of collision and the packet error rate ($\mathrm{PER}$) :

\begin{equation}
\begin{aligned}
p_f = p_{c} \cap \mathrm{PER} = (1-\mathrm{PDR})(1-(1-\yy{\eta}_{\tau}\tau)^{N-1}),
\end{aligned}
 \label{equation:Pro}
\end{equation}
where  $\mathrm{PDR}$ is the packet deliver ratio and $\yy{\eta}_{\tau}$ is a coefficient that normalize\ds{s} $\tau$ by the average number of time slots staying in an arbitrary state of the Markov chain.

By using (\ref{tau}) and (\ref{equation:Pro}), the value\ds{s} of $p_f$,$\tau$ and $b_{0,0}$ can be solved and used in  performance metrics.

\subsection{Performance Validation}
By comparing the analytical model\ds{ing} with simulation results, we can evaluate the accuracy of the Markov Model described in the previous section. We simulated $N$ \yy{single-link star topology} WBANs co-existing in the saturated regime (all WBANs always have a packet to transmit). A performance analysis is also conducted based on the steady-state solutions. We assumed when intra-WBAN interference occurs \ds{that} the packets collision will result in transmission failure $p_f = p_{c}$ as $\mathrm{PER} = 1(\mathrm{PDR}=0)$.

In \yy{our experiment, the data rate is set as \SI{32}{\kilo\bit\per\second}}. The retransmissions limits \yy{$m=4$}, and the \yy{minimum back-off length} $CW_{min} = \yy{64}$. \yy{The duration of one} superframe is $T_{s} = 64ms$. \yy{Each superframe consist of 256 time slots.} \yy{Each time slot last for $T_{\text{slot}} = 250\mu s$}. The beacon, payload and the ACK are set to be 30,110,10 bytes respectively.

\begin{figure}[!htb]
\centering
\includegraphics[width=8cm,height=6cm]{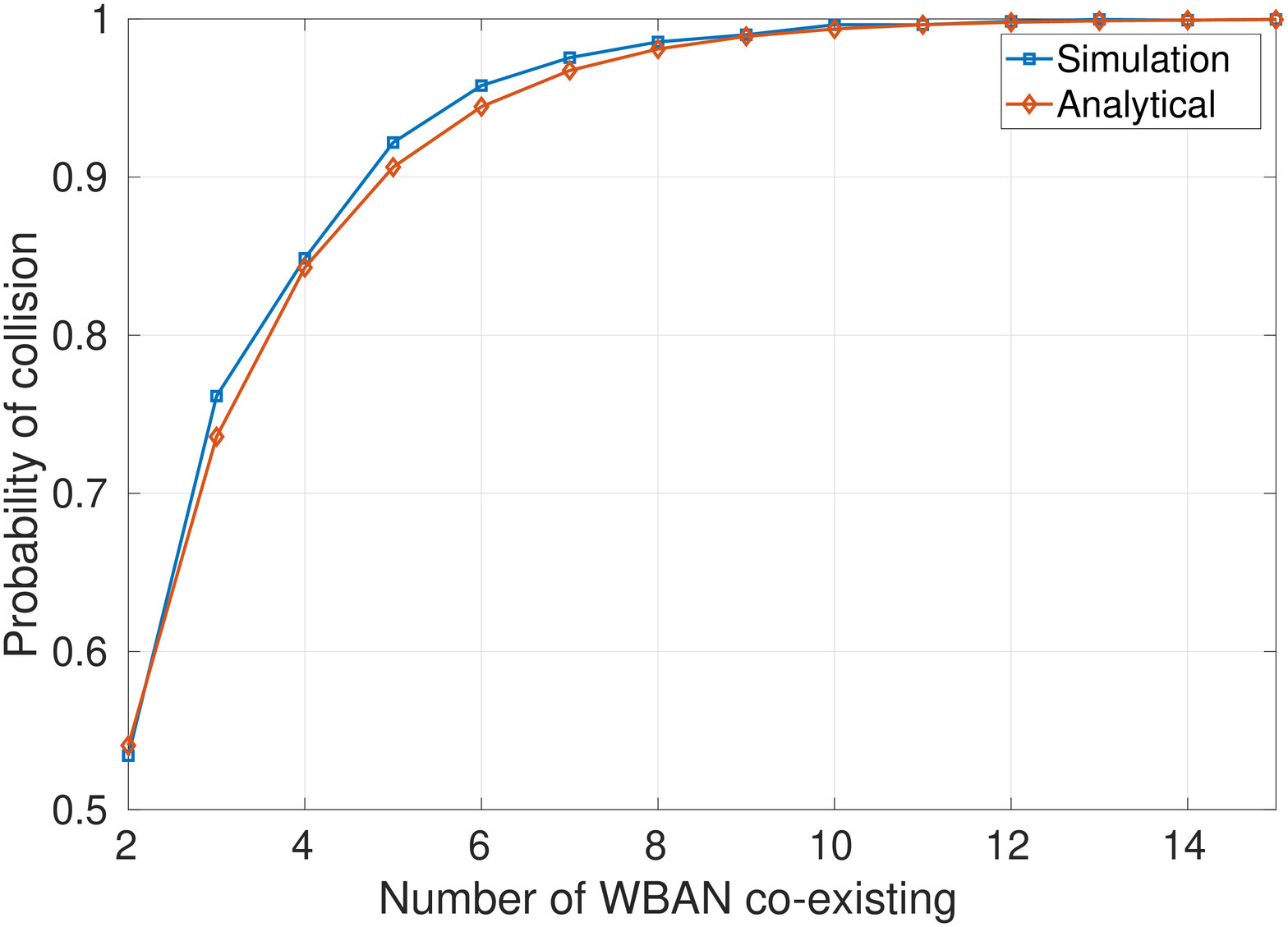}
\caption{The Comparison of Probability of Collision}
 \label{fig:CP_vali}
\end{figure}

\begin{figure}[!htb]
\centering
\includegraphics[width=8cm,height=6cm]{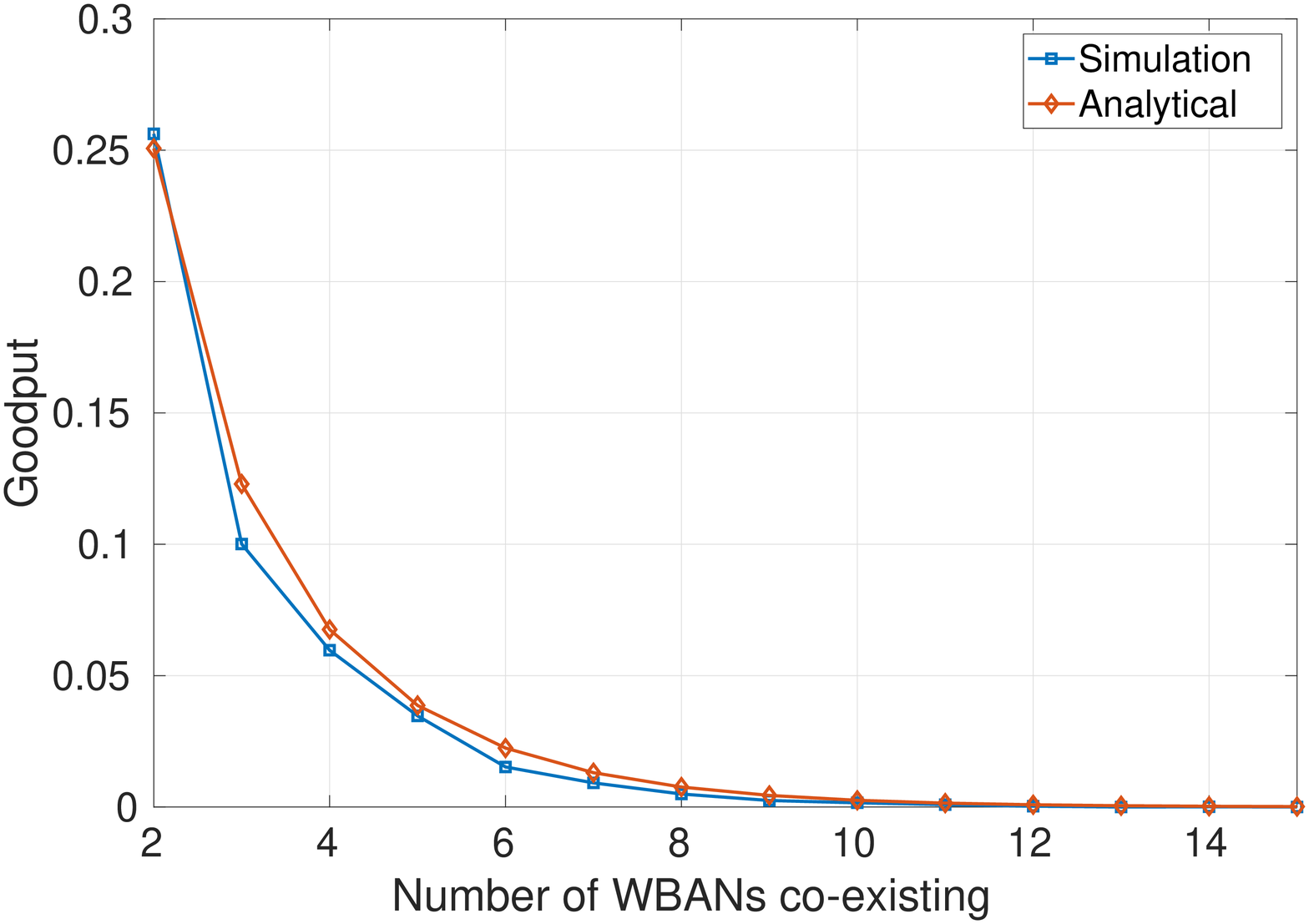}
\caption{The Comparison of \yy{Goodput}}
 \label{fig:TP_vali}
\end{figure}

Fig. \ref{fig:CP_vali} and Fig. \ref{fig:TP_vali} \ds{show} that, in terms of Probability of Collision and Goodput (S), the analytical model \ds{matches} the simulation results \ds{accurately}. Goodput is defined as the ratio between time \ds{elapsed} \ds{successfully delivering a} packet and the total time. Analytically, the goodput is calculated \ds{as}:
\begin{equation}\label{equation:Goodput}
\begin{aligned}
Goodput (S) = \frac{(1-p_{f}) \tau T_{payload}}{\tau  T_{s} + (1-\tau) \yy{T_{\text{slot}}}},
\end{aligned}
\end{equation}
where
\begin{equation}
\begin{aligned}
T_{s} = T_{beacon} + T_{payload} + T_{ack} + T_{idle}
\end{aligned}
 \label{Tsf}
\end{equation}
$T_{\yy{beacon}},T_{payload},T_{ack},T_{idle}$ represent the duration \ds{of} beacon, payload, ACK and idle respectively. However, weak matching occurs when less WBANs are co-existing, as randomness \yy{of the back-off counter in} the simulation is relatively large. Fig. \ref{fig:TP_vali} shows that when more WBANs \ds{are} co-located, the system's \yy{Goodput} decreases as the collision probability increases.

\section{Link Adaptation Game Algorithm}\label{Sec:4}
The majority of existing game theoretic algorithm\ds{s} in WBANs only \ds{focus} on transmit power control. However, \ds{when} considering saturated traffic conditions, in particular interfering networks, selfishly changing the transmit power may increases packet losses, and thus, reduce the overall throughput. Thus, we observed that greater system efficiency \ds{can be achieved} by varying performance-impacting characteristic such as modulation or data rate.

One of the objectives in th\ds{is} paper is to exploit the above MAC layer scheme to find a method that can further increases the system's \ds{energy} efficiency and reduce inference. In this section, we develop a utility-based game theory model that is a function of two variables: transmi\ds{ssion} power and data rate to address interference and packet contentions. The transmitter chooses the value of the transmit power and data rate to maximize \ds{energy} efficiency whilst meeting the packet delivery ratio ($\mathrm{PDR}$) requirements. Since the transmitter's action will be a function of the choice of data rate \ds{and transmit} power. Therefore, the two parameters need to \ds{be executed} jointly.

According to experimental results, \ds{an increase in} data rate can result in WBAN's packet delivery ratio degradation \ds{in} the same SINR regime. However, the length of payload transmission time is also related to the WBAN's data rate, \ds{so that} higher data rate may also reduce the intra-WBAN packet collision possibility as the transmission time is reduced accordingly, and thus mitigate interference. Furthermore, the proposed game theor\ds{etic} algorithm handles packet re-transmissions, until the retry limit \ds{is reached}. In addition, it should be noted that all game associated calculations are made by the hub, hence there will be no extra computational cost for the sensors.

\subsection{Game-theoretic System \yy{M}odel}

We consider the system model of the form described in Section 2.1 where multiple WBANs are co-located. All WBANs are within each others \ds{interference range}, and the corresponding transmitters always have packets to send. TDMA schemes with \ds{the} same random back-off scheduling mechanism are applied for all WBANs and the minimum back off slot length cannot be changed. There are no pre-assigned priorities among different WBANs \ds{such} that all links can expect identical priorities in traffic. Formally, we define the Link Adaptation Game \ds{in normal} form as $G = \{\mathbf{N},\mathbf{(P,R)},U\}$, \ds{notation described in the following sub-sections.}

\subsection{Players}

Each co-existing WBAN is a player in the \yy{Link Adaptation Game}, the player set is denoted by $\bf{N} $ = $\{1,2,3...N\}$. Within each WBAN, the sensor(transmitter) adapts its power and data rate \yy{by utilizing} the game-theoretic algorithms.

\subsection{Action Set}

We require that the data rate of WBAN $i$, ${R_i}$ is chosen from a discrete finite set $\overline{R} = \{R_{min} ,...,R_{max} \}$, where $R_{min}$ is the base rate and $R_{max}$ is \ds{the} maximum data rate. \yy{In each packet transmission}, \ds{and} variations \ds{in ${R_i}$} \yy{result in \ds{variations in} packet delivery ratio (PDR)}. In addition, each WBAN can adjust its power $P_i$ $\in$ $[P_{min},P_{max}]$. Thus the action selected by any player $i$ is defined as the pair $A_i = (P_i,R_i)$, where $P_i$ is the transmit power of player $i$ and $R_i$  is the data rate of player $i$. The aggregation of all players action is denoted as $\bf{A} = \yy{\bf{(P,R)} =  \{\bf{A_1,A_2...A_n}\} }$.

\subsection{Utility Function}

It is obvious that each player is always trying to maximizing its own utility.  However, due to the non-cooperative nature of this game, it is easy to see that in an attempt to maximize its own benefits at any cost, each WBAN is likely to consume maximum power, and the \yy{highest} data rate. This will also create excessive interference, leading to performance degradation. \ds{A} pricing mechanism is \ds{then} also introduced to penalize the use of excessive power. The utility function of each player is defined as follows:

\begin{equation}
U(P,R) = -C(P)+ \ln(1 + \mathrm{PDR}(P,R)) + G(R),
\label{Utility_Rate}
\end{equation}
where,
\begin{equation}
\begin{aligned}
C(P)= c \cdot P^g \\
G(R)= - q \cdot \frac{1}{R}
\end{aligned}
\end{equation}

the coefficients in the cost function $c,g,q$ are positive constants that can be tuned depend\ds{ing} on channel conditions. The linear cost function \yy{(where the exponent $g=1$)} \ds{is} commonly used in the literature, however, in the proposed game $g$ are generally greater than 1 to provide strictly concavity. $c,q$  \ds{are} constant non-negative weighting factors.

In equation \eqref{Utility_Rate},  \yy{a sigmoid approximation of $\mathrm{PDR}$ is} introduced. This model has been shown to be capable of approximating the PDR versus SINR of a wireless channel. Here, the SINR is defined as:
\begin{equation}
\gamma_{i} = \frac{h_{ii}P_{i}}{ \sum_{j = 1, j \neq i}^{N} h_{ij}P_{j} + \sigma^2 },
\end{equation}
where, $h_{ij}$ represents the channel attenuation between WBAN $i$ and WBAN $j$. \yy{$h_{ii}$} denotes the on-body channel attenuation in WBAN i. $\sigma$ is the noise \yy{gain}. The sigmoid $\mathrm{PDR}$ function of SINR is presented as follow\ds{s}:
\begin{equation}
\mathrm{PDR} = \exp \left( \alpha \cdot \gamma ^\beta \right)
\end{equation}

Fig. \ref{fig:PDR_est.eps} shows the comparison between approximated and simulated $\mathrm{PDR}$ vs. SINR for different data rate ($R_1 \sim R_5$) respectively. The sets of parameters $\yy{\alpha,\beta}$ of the sigmoid model in the equation above that best approximate the simulation curves are selected by computer-aided search and summarized in Table \ref{tab:BPSKTable}:
\begin{figure}[!htb]
\centering
\includegraphics[width=8cm,height=6cm]{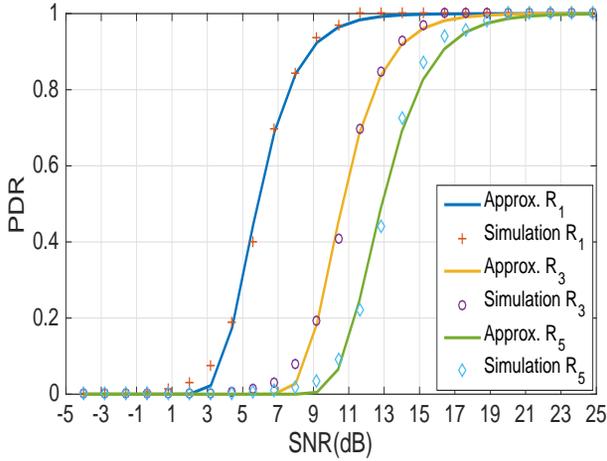}
\caption{PDR vs. SINR}
 \label{fig:PDR_est.eps}
\end{figure}

\begin{table}[!htb]
\centering
\caption{Coefficients for PDR Estimation }
\begin{tabular}{|c|c|c|}
\hline
   Data Rate & $\alpha$&$\beta$\\ \hline
 $R_1= 25.6 kbps$ &-100.02  &-3.66		  \\ \hline
 $R_2= 51.2 kbps$ &-214.95  & -2.82 \\ \hline
 $R_3= 76.8 kbps$ & -663.69 & -2.79 \\ \hline
 $R_4= 102.4 kbps$ & -1182.7 & -2.73 \\ \hline
 $R_5= 128.0 kbps$ & -1433.5 & -2.58 \\ \hline
\end{tabular}
\label{tab:BPSKTable}
\end{table}
\subsection{Nash Equilibrium}

The Nash equilibrium is a set of strategies that \ds{guarantee the best response of each player with respect to the chosen utility}. In the proposed non-cooperative game, the game is played by rational players, which implies that every player \ds{adopts the strategy achieving the} Nash Equilibrium.

\theoremstyle{definition}
\begin{definition}

Let $\bf{A^\ast} = \bf{(P^{\ast},R^{\ast})}$ be the Nash Equilibrium in the Link Adaptation Game, then for every $i\in N$:

\begin{equation}\label{eq:PR_ult}
U(P_i^{\ast},R_i^{\ast}) \geq U\yy{\{}(P_i,R_i),\bf{P^{\ast}\boldsymbol{_{-i}},R^{\ast}\boldsymbol{_{-i}}}\yy{\}},
\end{equation}
\end{definition}
where, $(\bf{P_{-i},R_{-i}})$ represents all other player strategies except for player $i$. At the end of every transmission (say $t$), players update their next transmit power and rate jointly to maximize the outcome \yy{of adapting} the utility function based on the current SINR:

\begin{equation}\label{eq:pr_ult}
\{P_i(t+1),R_i(t+1)\} = \argmax U \{ (P_i(t),R_i(t)),\mathbf{P^{\ast}\boldsymbol{_{-i}},R^{\ast}\boldsymbol{_{-i}}} \}
\end{equation}

The action profile $\mathbf{P} ^\ast=(P_1^\ast,P_2^\ast,P_3^\ast...P_n^\ast) $ for $n\geq2$ is a Nash Equilibrium, is the best response towards $\mathbf{P}^\ast\boldsymbol{_{-i}}$.

\subsection{Existence and uniqueness of the Nash equilibrium}

The existence and uniqueness of the Nash Equilibrium of the
proposed game are proved as follows:

\begin{lemma}\label{lemma1}

The action space $\overline{A} = (\overline{P},\overline{R})$ is not a convex set. However, under the condition that $R$ is fixed, $\overline{A} = (\overline{P},\overline{R})$  is a convex set.
\end{lemma}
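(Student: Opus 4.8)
The plan is to treat the two assertions separately, exploiting the fact that the rate action set $\overline{R} = \{R_{min}, \ldots, R_{max}\}$ is a \emph{finite} (hence discrete) subset of the positive reals, whereas the power action set $\overline{P} = [P_{min}, P_{max}]$ is a closed interval. Throughout I read the notation $(\overline{P}, \overline{R})$ as the Cartesian product $\overline{P} \times \overline{R} \subseteq \mathbb{R}^2$, so that an action $(P,R)$ is feasible iff $P \in [P_{min},P_{max}]$ and $R$ is one of the admissible discrete rates.

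For the first claim (non-convexity of $\overline{A}$), I would argue directly from the definition of a convex set. Since $\overline{R}$ contains at least two distinct rates, choose the pair $R_a, R_b \in \overline{R}$ achieving the smallest positive gap $|R_a - R_b|$; by minimality no element of $\overline{R}$ lies strictly between them, so in particular the midpoint $\tfrac{1}{2}(R_a + R_b) \notin \overline{R}$. Now fix any $P_0 \in \overline{P}$ and form the two points $\mathbf{a} = (P_0, R_a)$ and $\mathbf{b} = (P_0, R_b)$, both in $\overline{A}$. Their convex combination with weight $\tfrac{1}{2}$ is $(P_0, \tfrac{1}{2}(R_a + R_b))$, whose second coordinate is not an admissible rate; hence this point lies outside $\overline{A}$, witnessing a violation of convexity.

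For the second claim, I would observe that fixing $R = R_0 \in \overline{R}$ collapses the feasible set to the slice $\overline{A}\big|_{R = R_0} = [P_{min}, P_{max}] \times \{R_0\}$. A closed real interval is convex and a singleton is convex, and the Cartesian product of convex sets is convex, so this slice is convex. Equivalently, for any $P_1, P_2 \in [P_{min},P_{max}]$ and $\theta \in [0,1]$ the point $\theta(P_1,R_0) + (1-\theta)(P_2,R_0) = (\theta P_1 + (1-\theta)P_2,\, R_0)$ again has first coordinate in $[P_{min},P_{max}]$ and second coordinate $R_0$, so it stays in the slice.

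I do not expect a genuine obstacle here; the only subtlety in the first part is the selection of the pair $(R_a, R_b)$ — one must avoid pairs whose midpoint happens to coincide with another grid rate (which can occur for evenly spaced rates), and the ``closest pair'' choice sidesteps this cleanly. Conceptually this lemma is a setup step: it is exactly what forces the subsequent existence/uniqueness argument for the Nash equilibrium to proceed rate-by-rate on the convex power slices $\overline{P} \times \{R_0\}$, rather than applying a standard fixed-point theorem to $\overline{A}$ as a whole.
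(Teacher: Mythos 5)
Your proof is correct and takes essentially the same route as the paper, which also exhibits two actions sharing the power component $P_{max}$ but with distinct (adjacent) rates to break convexity, and then notes that the fixed-rate slice is a convex interval. Your extra care in choosing the closest pair of rates (so the midpoint cannot land on another grid rate) is a minor robustness refinement of the same argument, not a different approach.
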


\begin{proof}
We simply choose two points, $ A_1 = ({P_{max},R_1})$, $A_2 = ({P_{max},R_2})$ that have the same power component. A line connecting these points consists of only two points $A_1$,$A_2$ themselves. The intervening points on this line do not belong to $(P,R)$. Hence, $(P,R)$ is not convex.
However, suppose the data rate is fixed at, say $R_1$ . We note that a convex combination
$ A^\prime = \Lambda A^\prime_1 +(1-\Lambda) A^\prime_2$, $\Lambda \in [0,1]$, where $A^\prime_1 = (P_1,R_1),A^\prime_2= (P_2,R_1)$ are any two arbitrarily selected actions in $A$, such that $P_{min}< P_1, P_2 < P_{max}$  belongs to the set $\overline{A}$. Hence the set $\overline{A}$ is convex when $R$ is fixed.
\end{proof}

 \begin{theorem}\label{theorem1}
 (Existence) The game $G$ admits at least one Nash Equilibrium, when assuming $R$ is fixed.
 \end{theorem}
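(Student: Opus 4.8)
The plan is to obtain the result as a direct application of the classical existence theorem for pure-strategy Nash equilibria in concave games (the Debreu--Glicksberg--Fan / Rosen theorem): if, for every player $i$, the action set is a nonempty, compact, convex subset of a Euclidean space, the payoff $U$ is jointly continuous in the whole action profile, and $U$ is concave (or merely quasi-concave) in player $i$'s own action for each fixed choice of the opponents, then $G$ possesses at least one Nash equilibrium. I would verify these three hypotheses one at a time.

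First, the action set. Once $R$ is held fixed, player $i$'s action collapses to the scalar $P_i\in[P_{min},P_{max}]$, a nonempty compact interval, and by Lemma~\ref{lemma1} the restricted action space $\overline{A}=(\overline{P},\overline{R})$ is convex under this restriction. Second, continuity: for fixed opponent powers $\mathbf{P}_{-i}$ the interference-plus-noise term $\sum_{j\neq i}h_{ij}P_j+\sigma^2$ is a positive constant, so $\gamma_i$ is a smooth function of $P_i$ on $[P_{min},P_{max}]$; composing with $\mathrm{PDR}=\exp(\alpha\gamma_i^\beta)$, then with $\ln(1+\cdot)$, and adding the continuous terms $-C(P_i)=-cP_i^g$ and the constant $G(R)$ shows that $U$ in \eqref{Utility_Rate} is continuous in $P_i$, and in fact jointly continuous in $\mathbf{P}$ since the same expressions depend continuously on all of the $P_j$ simultaneously.

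The substantive step is concavity of $P_i\mapsto U(P_i,R)$. The pricing term contributes $-cP_i^{g}$, whose second derivative $-cg(g-1)P_i^{g-2}$ is strictly negative and bounded away from zero on $[P_{min},P_{max}]$ whenever $g>1$ and $P_{min}>0$. The term $\ln(1+\mathrm{PDR}(P_i,R))$ is bounded (it takes values in $(0,\ln 2)$), and a direct computation of $\tfrac{d^2}{dP_i^2}\ln\!\big(1+\exp(\alpha\gamma_i^\beta)\big)$ via the chain rule — using that $\gamma_i$ is an affine function of $P_i$ divided by a positive constant, and that $0<\gamma_i\le h_{ii}P_{max}/\sigma^2$ uniformly — gives a bound on this second derivative that does not depend on $\mathbf{P}_{-i}$. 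Consequently, for the pricing coefficient $c$ chosen large enough relative to that bound (which is the regime the paper prescribes), $U''(P_i,R)<0$ throughout $[P_{min},P_{max}]$, so $U(\cdot,R)$ is strictly concave; quasi-concavity, which is all the existence theorem needs, follows a fortiori. Applying the theorem then yields a Nash equilibrium, proving Theorem~\ref{theorem1}.

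The main obstacle I anticipate is precisely this concavity step: $\ln(1+\mathrm{PDR})$ is itself sigmoid-shaped in $P_i$ and is convex near small SINR, so concavity of $U$ cannot be inherited from it and must instead come from \emph{dominance} of the strictly concave pricing term $-cP_i^g$. Making that dominance rigorous requires a uniform (in the opponents' actions) upper bound on the second derivative of the log-PDR term together with the matching lower bound on $-C''$, which is exactly where the standing assumptions $g>1$, $P_{min}>0$, and $c$ sufficiently large are genuinely used. Compactness, convexity (via Lemma~\ref{lemma1}), and continuity are otherwise routine.
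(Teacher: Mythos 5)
Your proposal is correct and follows essentially the same route as the paper: invoke the classical Debreu--Glicksberg--Fan existence theorem for concave games after checking that the fixed-$R$ action set is nonempty, compact and convex (Lemma~\ref{lemma1}) and that $U$ is continuous in $P_i$. You are in fact more careful than the paper's own proof, which only exhibits the well-defined first derivative and defers concavity to Theorem~\ref{theorem2}; your observation that the sigmoid term $\ln(1+\mathrm{PDR})$ is not itself concave in $P_i$ and that quasi-concavity must be secured by the pricing term $-cP_i^{g}$ (with $g>1$, $P_{min}>0$) is a genuine gap-filling of the hypothesis the existence theorem actually needs.
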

\begin{proof}
In game $G$, for \yy{WBAN $i$}, $i \in \mathcal{N}$ the following condition can be verified. When the \yy{data rate of WBAN $i$} is fixed, the action set $A = (P,R)$ is a nonempty, convex, bounded in finite dimension vector space as proved in \textbf{Lemma} \ref{lemma1}. The utility function $U$ is continuous for all $P_i\in[P^{min}_i,P^{max}_i]$. As the first derivative of the utility function $U$ is well defined as:
 \begin{equation}
 \frac{\delta U_i}{\delta P_i}=-c \cdot gP^{g-1}+\left(1-\frac{1}{(1+\mathrm{PDR}_i)}\right)\alpha\beta\frac{\gamma_i^{\beta}}{P_i},
 \end{equation}
where $\frac{|h_i^i(k_i)|^2}{I_-i}=\frac{\gamma_i}{P_i}$, therefore, as $P_i\in[P^{min}_i,P^{max}_i]$ is real and the $\mathrm{PDR}_i$ is non-zero, the Theorem \ref{theorem1} is proved.
\end{proof}
 \begin{theorem} \label{theorem2}
 (Uniqueness): The Nash Equilibrium in each stage of the game $G$ is unique, and independent of history so it is a unique sub-game perfect equilibrium.
 \end{theorem}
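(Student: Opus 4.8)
The plan is to reduce Theorem~\ref{theorem2} to a fixed-point uniqueness statement for the per-stage best-response map and then lift that to the repeated game. Throughout I keep $R$ fixed (as in Theorem~\ref{theorem1}), so the object of study in a single superframe is the pure power-control game $G_P = \{\mathbf{N}, \overline{P}, U\}$ played on the compact convex box $\prod_i [P_i^{\min}, P_i^{\max}]$, which is nonempty by Lemma~\ref{lemma1} and has at least one Nash Equilibrium by Theorem~\ref{theorem1}.

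First I would show that for each fixed profile $\mathbf{P}_{-i}$ the map $P_i \mapsto U_i(P_i, \mathbf{P}_{-i})$ is strictly concave. The cost term contributes $-cg(g-1)P_i^{g-2} < 0$ whenever $g > 1$, and I would check that, for $g$ in the admissible range, this dominates the curvature of $\ln(1 + \exp(\alpha \gamma_i^\beta))$, whose dependence on $P_i$ enters only through $\gamma_i = h_{ii}P_i / I_{-i}$ with $I_{-i} = \sum_{j \neq i} h_{ij} P_j + \sigma^2$. Strict concavity together with compactness of $[P_i^{\min}, P_i^{\max}]$ yields a unique maximiser $B_i(\mathbf{P}_{-i})$, pinned down either by the first-order condition already displayed in the proof of Theorem~\ref{theorem1} or by an endpoint of the interval. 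Hence the best-response correspondence is in fact a single-valued map $\mathbf{B}(\mathbf{P}) = (B_1(\mathbf{P}_{-1}), \dots, B_N(\mathbf{P}_{-N}))$ and every Nash Equilibrium of $G_P$ is a fixed point of $\mathbf{B}$.

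For uniqueness of that fixed point I see two viable routes, and I would pursue whichever yields the cleanest bookkeeping. Route (a): since $U_i$ depends on the other players only through the aggregate interference $I_{-i}$, and raising any $P_j$ strictly raises $I_{-i}$, the FOC shows $B_i$ is positive and monotone in $\mathbf{P}_{-i}$; it then remains to verify scalability, $\mu B_i(\mathbf{P}_{-i}) > B_i(\mu \mathbf{P}_{-i})$ for $\mu > 1$, after which a standard-interference-function argument (à la Yates) gives a unique fixed point. Route (b): exhibit the pseudo-gradient $\sigma(\mathbf{P}) = (\partial U_i / \partial P_i)_i$ and show that $[\,\partial \sigma / \partial \mathbf{P}\,] + [\,\partial \sigma / \partial \mathbf{P}\,]^{\!\top}$ is negative definite on the box (diagonal strict concavity), i.e.\ that the own-curvature term $-cg(g-1)P_i^{g-2}$ outweighs the cross-terms $\partial^2 U_i / \partial P_i \partial P_j$; Rosen's theorem then gives a unique equilibrium of this concave game. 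Either way the analytic heart is the same inequality: the pricing exponent $g > 1$ makes the penalty curvature dominate the (bounded) sensitivity of the sigmoid PDR term to interference.

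Finally, to pass from stage uniqueness to subgame perfection: because the traffic is saturated (Assumption A1) and the payoff in superframe $t$ depends only on the actions $(\mathbf{P}(t), \mathbf{R}(t))$ and the current channel realisation — no state is carried from one superframe to the next, and the update rule \eqref{eq:pr_ult} is myopic — the dynamic game decomposes into independent copies of $G$. Backward induction (equivalently, the observation that in each stage any history-dependent action is strictly dominated by the stage best response) then shows that playing the unique stage Nash Equilibrium in every superframe, regardless of history, is the unique subgame-perfect equilibrium. The main obstacle is the uniqueness step: verifying either the scalability of $B_i$ or the diagonal-strict-concavity inequality for the nonlinear, interference-coupled sigmoid PDR term, which is where the restriction $g > 1$ (and any implicit bound tying $c, g$ to $\alpha, \beta$ and the channel gains) must be used quantitatively, and where boundary solutions $P_i \in \{P_i^{\min}, P_i^{\max}\}$ need a separate, but routine, argument.
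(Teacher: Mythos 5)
Your first step --- computing $\partial^2 U_i/\partial P_i^2$, showing that the pricing curvature $-cg(g-1)P_i^{g-2}$ together with the sign of $(\beta-1)$ makes it negative, and concluding that each player's utility is strictly concave in $P_i$ with a unique maximiser at the stationary point --- is, in its entirety, the paper's proof of Theorem~\ref{theorem2}. The paper stops there. Everything you add afterwards is material the paper does not contain, and you are right to add it: strict concavity of each $U_i$ in the player's \emph{own} action yields only single-valuedness of the best-response map, not uniqueness of its fixed point, so the argument as written in the paper establishes uniqueness of best responses rather than uniqueness of the Nash Equilibrium. Your two proposed completions (a Yates-style standard-interference-function argument exploiting the fact that $U_i$ sees the opponents only through the aggregate $I_{-i}$, or Rosen's diagonal strict concavity) are the standard ways to close this, and your treatment of the ``independent of history'' clause --- decomposing the saturated, memoryless repeated game into independent stage games and invoking backward induction --- is likewise absent from the paper, which simply asserts subgame perfection in the theorem statement.

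That said, your proposal is a plan rather than a proof: the step you yourself flag as the analytic heart --- verifying scalability of $B_i$, or negative definiteness of the symmetrised pseudo-gradient Jacobian, for the sigmoid $\mathrm{PDR}=\exp(\alpha\gamma^\beta)$ under the given pricing --- is exactly where quantitative conditions tying $c$ and $g$ to $\alpha$, $\beta$ and the channel gains would have to be produced, and neither you nor the paper produces them. (The paper later invokes a potential-game construction to handle variable $R$, which, with a strictly concave potential, would yield uniqueness by a different route; but no such argument appears inside the proof of Theorem~\ref{theorem2} itself.) In short: the portion of your argument you actually complete coincides with the paper's proof, and the portion you leave open identifies a genuine gap that the paper's proof shares.
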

\begin{proof}
The second derivative of $U(\cdot)$ is shown to be always negative $\forall{i}$, so that $U(\cdot)$ is strictly concave.
\begin{multline}
\frac{\delta^2 U_i}{\delta P_i^2}=-c \cdot g(g-1)P_i^{g-2}
+\left(1-\frac{1}{(1+\mathrm{PDR}_i)}\right) \gamma_i^\beta \alpha \beta(\beta-1)/P_i^2\\
-\frac{\mathrm{PDR}_i}{(1+\mathrm{PDR}_i)^2}\alpha^2\beta^2\gamma_i^{2\beta}/P_i^2,
\end{multline}
where $\mathrm{PDR}_i$ is always positive and between (0,1), and the term $(\beta-1)$ is less than 0. In addition $w$ is positive, thus $\frac{\delta^2 U_i}{\delta P_i^2}<0$. Therefore the utility function has a global maximum at $P_i^\ast$ which occurs at the point where $\frac{\delta U}{\delta P_i}=0$.
\end{proof}

\begin{algorithm}
\caption{Main steps for Link Adaptation Game}\label{Algorithm:2}
\begin{algorithmic}[1]

\State Initializing MAC Parameters when, $CW = CW_{min}$ , $b=0$ ,$w= 0$
\State Hub sends a beacon to sensor $i$ with synchronization information  \label{marker}

\State After receiving the beacon, the sensor $i$ transmit\ds{s} data using the scheduled slots with $(P(t),R(t))$.

\If {Hub successfully receives the packet }

\State The Hub choosing the transmit power and rate for  next transmission by equation $\eqref{eq:pr_ult}:\{P_i(t+1),R_i(t+1)\} = \argmax U\{(P_i,R_i),\bf{P_{-i}^{\ast},R_{-i}^{\ast}\}} $

\State Sensor keeps inactive for a period of $T_{idle}$ until the end of the superframe.
\State \Goto{marker}
\State $i \gets i+1$
\State $t \gets t+1$
\State \textbf{close};
\ElsIf {The transmission fails because of the interference, the sensor will not receive the ACK}
\State $b \gets b+1$
\If {$b > MaxRetransmissionLimit$}
\State  Discard The Packet
\State \Goto{marker}
\State \textbf{close};
\EndIf

\State The Hub choosing the transmit power and rate for  next transmission by equation $\eqref{eq:pr_ult}:\{P_i(t+1),R_i(t+1)\} = \argmax U\{(P_i,R_i),\bf{P_{-i}^{\ast},R_{-i}^{\ast}\}} $
\State The hub calculate\ds{s} the backoff length \label{opretion:game Start}
\State $W = \lambda^b CW_{min}$
\State $w= random\{0, CW\}$
\State Sensor keeps inactive for a period of $T_{idle}$, and start\ds{s} back off until the end of the superframe.\label{opretion:game end}

\State $w \gets w-1$
\If {$w = 0$}
\State \Goto{marker}
\State $t \gets t+1$
\State \textbf{close};

\EndIf
\EndIf

\end{algorithmic}
\end{algorithm}

However, in \ds{practice}, \ds{the} data rate $R$ is not always fixed. Hence, we need to make sure that the game only admits a unique Nash Equilibrium solution over the action space $\overline{A}$.  As the concavity of the utility function leads to the uniqueness of the Nash Equilibrium, we \ds{use} the concept of potential game\cite{MONDERER1996124}, which  provides useful properties concerning the justification of Nash equilibrium.

\subsection{Forming a Potential Game}

For game $G$, when at high $\mathrm{PDR}$ regime (where the system usually operates), we can obtain the following approximation by using Taylor expansion for the last term in the utility function (\ref{eq:PR_ult}):

\begin{equation}
\ln(1+\mathrm{PDR}(A_i,A_{-i})) = \ln(1 + \exp (\alpha \cdot \gamma ^{\beta} )) \\ \approx \ln(2) + \frac{\alpha \cdot \gamma  ^{\beta} }{2} + \frac{(\alpha \cdot \gamma  ^{\beta})^2 }{8}  ...
\end{equation}

Substitute this approximation in to the utility function (only takes the first order term), we can get the a new game with utility function defined as:
\begin{equation}
UP_i(P,R) = -c \cdot P_i^g+ q \cdot \frac{1}{R_i} + \ln(2) +  \frac{\alpha(R_i) \cdot \gamma  ^{\beta(R_i)} }{2} \approx U(P,R)
\end{equation}
\ds{Thus we transform the game $G$ to a potential game denoted} by $G_P = \{\mathbf{N}, \mathbf{A}, U_P\}$, where  $\yy{U_P}$ is the new utility function.

We firstly provide the definition of the exact potential game, and proceed to show that the game \ds{belongs to} the class of exact potential game\ds{s}.

\begin{definition}
A game is said to be an exact potential game if there exists a function satisfying:

\begin{equation}
U(S_i,S_{-i}) - U(T_i,S_{-i}) = F(S_i,S_{-i}) - F(T_i,S_{-i}),
\end{equation}
where $F$ is called the potential function that can map the action space of the game in to \ds{a} real space.
\end{definition}

The Game $G_P$ is an exact potential game, with a potential function defined as:

\begin{equation}
F_i(A) = \sum -cP_i^w - q\frac{1}{R_i} +  \frac{\alpha(R_i) \cdot \gamma ^{\beta(R_i)} }{2}
\end{equation}

Notice that we discard $\ln(2)$ in the utility function when constructing the potential function as the constants can be canceled out. We can see that $F$ satisfies the Definition 2.1 in \cite{la_potential_2016},

\begin{multline}
F_i =-cP_i^g - q\frac{1}{R_i} +  \frac{\alpha(R_i) \cdot \gamma  ^{b(R_i)} }{2} +  \sum_{j \neq i} -cP_j^w - q\frac{1}{R_j} + \frac{\alpha(R_j) \cdot \gamma ^{b(R_j)} }{2} \\ \hspace{-75mm} \mathrm{and,} \\
\\ F(A_i,A_-i)-F(T_i,A_{-i})  =  -c P_i^g - q\frac{1}{R_i} + \frac{\alpha(R_i) \cdot \gamma  ^{b(R_i)} }{2}  -c P_{T^g_i} - q\frac{1}{R_{T_i}} \\ + \frac{\alpha(R_{T_i}) \cdot \gamma  ^{b(R_{T_i})} }{2} = U_P(A_i,A_{-i}) - U_P(T_i,A_{-i}),
\end{multline}

where $T_{i} = (P_{T_i},R_{T_i})$, and thus it is a potential function of the game $G_P$. Also, game $G_P$ is a best response potential game, which defined as :

\begin{definition}
The game $G$ is a best-respon\ds{se} game if and only if a potential function $F$ exist\yy{s} such that,
\begin{equation}
\argmax U(A_i,A_{-i}) = \argmax F(A_i,A_{-i})
\end{equation}

\end{definition}
According to \cite{VOORNEVELD2000289}, this leads to the following lemma:



\begin{lemma}
For best-response game $G$ defined over action space $\overline{A}$, with a potential function $F$. If $A\in\overline{A}$ maximizes $F$, the\ds{n} it is a Nash Equilibrium for $G$.
\end{lemma}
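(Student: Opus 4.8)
The plan is to use the defining property of a best-response potential game directly, reducing the global optimality of $F$ to a collection of per-player best-response conditions. First I would fix an action profile $A^\ast = (A_1^\ast,\dots,A_N^\ast)\in\overline{A}$ that maximizes $F$ over the whole action space, and then restrict attention to a single player $i$ while freezing the remaining components at $A_{-i}^\ast$. Because the action space factorizes as a product, $\overline{A} = \prod_{i} \overline{A}_i$ (each player independently chooses $A_i = (P_i,R_i)$), every unilateral deviation by player $i$ stays inside $\overline{A}$; hence global optimality of $A^\ast$ forces $A_i^\ast$ to maximize the single-variable map $A_i \mapsto F(A_i,A_{-i}^\ast)$ over $\overline{A}_i$ — otherwise substituting a better response would strictly increase $F$, a contradiction. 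Thus $A_i^\ast \in \argmax F(A_i,A_{-i}^\ast)$.

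Next I would invoke the definition of a best-response potential game: by hypothesis $G$ admits $F$ as a best-response potential, so $\argmax U(A_i,A_{-i}^\ast) = \argmax F(A_i,A_{-i}^\ast)$. Combining this with the previous step yields $A_i^\ast \in \argmax U(A_i,A_{-i}^\ast)$, that is, $U(A_i^\ast,A_{-i}^\ast) \ge U(A_i,A_{-i}^\ast)$ for every admissible $A_i$. Since the index $i$ was arbitrary, this inequality holds simultaneously for all players, which is precisely the defining condition of a Nash Equilibrium given earlier. Therefore $A^\ast$ is a Nash Equilibrium of $G$.

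The argument is essentially immediate once the definitions are aligned, so I do not anticipate a serious obstacle; the only point needing care is the implicit step that a global maximizer of $F$ is also a coordinate-wise maximizer, which relies on the product structure of $\overline{A}$ and must be stated explicitly. A secondary subtlety is that $\argmax$ is in general set-valued, so I would phrase the intermediate and final claims as set memberships rather than equalities; this is why the conclusion delivers the weak inequality in the Nash condition rather than strict optimality — consistent with the earlier development, where strict concavity of the utility (the uniqueness theorem) is what upgrades this equilibrium to the unique one.
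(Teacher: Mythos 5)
Your proof is correct. The paper itself does not supply a proof of this lemma at all — it simply defers to the cited reference (Voorneveld, \emph{Best-response potential games}) — and your argument is exactly the standard one underlying that citation: the product structure of $\overline{A}$ makes a global maximizer of $F$ a coordinate-wise maximizer, and the best-response property $\argmax_{A_i} U(A_i,A_{-i}) = \argmax_{A_i} F(A_i,A_{-i})$ then transfers this to $U$, giving the Nash condition for every player.
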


\subsection{\yy{Large Midpoint Property} and Discrete Concavity}
For an exact potential game, the change of the potential function attributes the same amount of change in a player's utility function due to its strategy deviation. A concave potential function guarantees that every Nash equilibrium of the game also maximizes a potential function.

Therefore, with the help of the results in \cite{ui_discrete_2008} on discrete concavity for potential games, we can prove the uniqueness of the Nash equilibrium in $G_p$: since the maximizer in the potential function $F$ is unique, so is the Nash Equilibrium in game $G$.

According to  \cite{ui_discrete_2008} the  Large Midpoint Property (LMP) is defined as:

 \begin{definition}
 For a function defined over discrete set satisfies LMP if for any $x,y\in X$ with $|x-y|=2$ ,
 \begin{equation}
 \max_{|x-z|=|z-y| = 1} f(z) = \left\{
\begin{aligned}
>& \min\{f(x),f(y)\}  ,\mathrm{if}  f(x)\neq f(y) \\
\geq & f(x) = f(y) \mathrm{,otherwise}
\end{aligned}
\right.
 \end{equation}
 \end{definition}

We show that the potential function satisfies the LMP the discrete strategy $\overline{R}$. As the choice of data rate $[R_{\yy{min}},R_{\yy{max}}]$ is \yy{discrete}, and $\overline{P}$ is continuous. We can have the following proposition:
\begin{theorem}\label{Theorem:LMP}
For a certain power strategy $P \in \overline{P}$, the potential function $F(A)$ , where $A =(P,R)$ satisfies LMP for $R \in \overline{R}$.
\begin{proof}
See in Appendix \ref{app:2}.
\end{proof}
\end{theorem}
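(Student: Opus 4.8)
The plan is to exploit that the rate grid $\overline R=\{R_1,\dots,R_5\}$ is an arithmetic progression (common difference $25.6$ kbps, see Table~\ref{tab:BPSKTable}), so that for every triple required by the LMP the middle point is the exact arithmetic midpoint of its two neighbours. First I would fix $P\in\overline P$; then $\gamma_i$ is a constant, and since the LMP concerns a deviation of a single player's rate $R=R_i$ only — and $\gamma_j$ for $j\neq i$ depends on powers, not rates — the potential function $F$ collapses, up to an additive constant, to the one-variable function
\begin{equation}
f(R)=-\frac{q}{R}+\frac12\,\alpha(R)\,\gamma_i^{\,\beta(R)}=-\frac{q}{R}+\frac12\ln\mathrm{PDR}(R).
\end{equation}
Because of the arithmetic-progression structure, it then suffices to prove, for $k=2,3,4$, the discrete midpoint-concavity inequality $f(R_k)\ge\tfrac12\bigl(f(R_{k-1})+f(R_{k+1})\bigr)$: this is strictly stronger than the LMP, since $\tfrac12\bigl(f(R_{k-1})+f(R_{k+1})\bigr)\ge\min\{f(R_{k-1}),f(R_{k+1})\}$ with equality precisely when $f(R_{k-1})=f(R_{k+1})$, which matches the two cases in the LMP definition.

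Next I would split $f=f_1+f_2$ with $f_1(R)=-q/R$ and $f_2(R)=\tfrac12\alpha(R)\gamma_i^{\beta(R)}$. The term $f_1$ is concave on $(0,\infty)$ because $f_1''(R)=-2q/R^3<0$ (recall $q>0$), hence it satisfies midpoint concavity on any grid, in particular on $\overline R$. For $f_2$ I would pass to a smooth, monotone interpolation $\tilde\alpha(\cdot),\tilde\beta(\cdot)$ of the tabulated coefficients — with $\tilde\alpha<0$, $|\tilde\alpha|$ increasing in $R$, $\tilde\beta<0$, $|\tilde\beta|$ decreasing in $R$, as Table~\ref{tab:BPSKTable} exhibits — and show that the continuous extension $\tilde f_2(R)=\tfrac12\tilde\alpha(R)\exp\!\bigl(\tilde\beta(R)\ln\gamma_i\bigr)$ is concave on $[R_{min},R_{max}]$, i.e.\ $\tilde f_2''\le 0$. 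Concavity of $\tilde f_2$ on the interval then yields midpoint concavity of $f_2$ on the arithmetic grid, and adding the two inequalities gives $f(R_k)\ge\tfrac12\bigl(f(R_{k-1})+f(R_{k+1})\bigr)$ for $k=2,3,4$ and every admissible $P$, which is the claim.

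The main obstacle is the concavity of $\tilde f_2$: differentiating $\tilde\alpha(R)\exp(\tilde\beta(R)\ln\gamma_i)$ twice produces a combination of $\tilde\alpha'',\tilde\alpha',\tilde\alpha,\tilde\beta'',\tilde\beta',\tilde\beta$ and powers of $\ln\gamma_i$ whose sign is not negative for arbitrary $\gamma_i$, so one must restrict to the high-SINR / high-$\mathrm{PDR}$ window in which the game actually operates — the same regime in which the Taylor truncation leading to $U_P$ is valid ($\mathrm{PDR}$ near $1$, hence $\ln\mathrm{PDR}$ small). Since only five $(\alpha,\beta)$ samples are available, the cleanest finish is to verify the three required second-difference inequalities $f(R_{k+1})-2f(R_k)+f(R_{k-1})\le 0$ numerically against Table~\ref{tab:BPSKTable} across that SINR window, rather than in closed form. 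A secondary point to address is that the choice of interpolant is without loss of generality: the LMP constrains only the three grid triples, so any interpolation agreeing with the table at $R_1,\dots,R_5$ and consistent with the observed monotonicity suffices.
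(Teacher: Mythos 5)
Your proposal follows essentially the same route as the paper's proof in Appendix \ref{app:2}: fix $P$, note that a unilateral change of $R_i$ moves only the $i$-th summand of $F$ (because $\gamma_j$ for $j\neq i$ depends on powers, not rates), split that summand into $-q/R$ and $H(R)=\tfrac{1}{2}\alpha(R)\gamma^{\beta(R)}$, and prove the discrete midpoint-concavity inequality $2F((P,R_z),A_{-i})>F((P,R_x),A_{-i})+F((P,R_y),A_{-i})$, which, as you correctly observe, implies LMP. The only real divergence is in how $H$ is treated. You pass to a smooth monotone interpolation of $(\alpha,\beta)$, admit that the sign of $\tilde f_2''$ cannot be controlled for arbitrary $\gamma$, and propose to finish by numerically checking the three second differences over the operating SINR window; the paper instead stays discrete, reading midpoint concavity of $\alpha$ and monotonicity of $\beta$ off the table, invoking convexity of $t\mapsto\gamma^t$ for $\gamma>1$ (the same high-PDR restriction $\gamma>0$~dB that you impose), and assembling these via a discrete product rule into $\alpha(R_y)\gamma^{\beta(R_y)}-2\alpha(R_z)\gamma^{\beta(R_z)}+\alpha(R_x)\gamma^{\beta(R_x)}<0$ for all such $\gamma$ simultaneously, with no interpolant. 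The paper's finish is preferable in that it avoids quantifying a numerical check over a continuum of $\gamma$ (which, as stated, your plan would require and which is not by itself a proof); on the other hand, both arguments ultimately rest on empirical inequalities among the five fitted coefficient pairs, and your direct check of the full second difference is arguably more robust, since the paper's intermediate premise $2\alpha(R_z)>\alpha(R_x)+\alpha(R_y)$ does not in fact hold for every consecutive triple of the tabulated values (e.g.\ $R_3,R_4,R_5$), whereas the combined second difference may still have the correct sign once the $-q/R$ and $\gamma^{\beta}$ contributions are included. Finally, the arithmetic-progression structure of $\overline{R}$ that you highlight is tacitly needed by the paper as well, since concavity of $-q/R$ yields the midpoint inequality only when $R_z$ is the true midpoint of $R_x$ and $R_y$.
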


This leads to the following proposition:

\begin{proposition}
Suppose that $A =(P,R)$ satisfies LMP for $R \in \overline{R}$. Then, only if   $ F(A=(P,R_x),A_{-i}) \geq F(A=(P,R_y),A_{-i})$ for all $y$, $F(A=(P,R_x),A_{-i}) \geq F(A=(P,R_y),A_{-i})$ for all $|x-y|\leq1$
\label{Propo:6}
\end{proposition}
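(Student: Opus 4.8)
The proposition asserts that, under the hypotheses of Theorem~\ref{Theorem:LMP} and with the other players' actions $A_{-i}$ held fixed, a rate $R_x$ that is a \emph{local} maximiser of $F\bigl((P,\cdot),A_{-i}\bigr)$ over $\overline{R}$ (optimal among rates at index-distance at most $1$) is also a \emph{global} maximiser (optimal among all rates); the converse implication is trivial. This is the discrete counterpart of the classical fact that a local maximum of a concave function is global, with concavity replaced here by the LMP proved in Theorem~\ref{Theorem:LMP}. Since $\overline{R}=\{R_1,\dots,R_{|\overline{R}|}\}$ is totally ordered, the whole question is one-dimensional, so the plan is: fix $P$ and $A_{-i}$, set $f(x):=F\bigl((P,R_x),A_{-i}\bigr)$ on the index set $\{1,\dots,|\overline{R}|\}$, observe that $f$ inherits the LMP from Theorem~\ref{Theorem:LMP}, and run a contradiction argument along the rate axis.

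Assume $x$ is a local maximiser of $f$ but not a global one, and pick $y$ with $f(y)>f(x)$ at \emph{minimum} index-distance $d=|x-y|$; local maximality forces $d\ge 2$, and minimality of $d$ gives $f(v)\le f(x)$ for every index $v$ lying strictly between $x$ and $y$. Writing the segment as $x=v_0,v_1,\dots,v_{d-1},v_d=y$, I would show by downward induction on $i$ that $f(v_i)<f(v_{i+1})$ for $0\le i\le d-2$. For the base case, apply LMP to the pair $(v_{d-2},v_d)$ with midpoint $v_{d-1}$: since $f(v_{d-2})\le f(x)<f(y)$ the endpoints differ, so $f(v_{d-1})>\min\{f(v_{d-2}),f(v_d)\}=f(v_{d-2})$. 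For the inductive step, apply LMP to $(v_i,v_{i+2})$ with midpoint $v_{i+1}$: if the endpoints are equal then LMP gives $f(v_{i+1})\ge f(v_i)=f(v_{i+2})$, contradicting the inductive hypothesis $f(v_{i+1})<f(v_{i+2})$; if they differ then $f(v_{i+1})>\min\{f(v_i),f(v_{i+2})\}$, and since $f(v_{i+1})<f(v_{i+2})$ the minimum must be $f(v_i)$, i.e.\ $f(v_i)<f(v_{i+1})$. Carrying the induction down to $i=0$ yields $f(x)=f(v_0)<f(v_1)$ with $v_1$ a neighbour of $x$, contradicting local maximality; hence $x$ is a global maximiser, which is Proposition~\ref{Propo:6}.

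The main obstacle I anticipate is not the skeleton above but careful handling of the plateau/equality branch of the LMP: LMP yields a strict inequality only when the two endpoints of a distance-$2$ pair disagree, so each inductive step must dispatch the "flat" configuration separately, and one must check that minimality of $d$ is invoked only for indices genuinely closer to $x$ than $y$ (in particular that each $v_i$ with $i\le d-2$ indeed satisfies $f(v_i)\le f(x)$). A cleaner alternative packaging, which I would fall back on if the induction gets unwieldy, is to first prove from the LMP that every upper level set $\{x:f(x)\ge c\}$ is an interval of the rate axis; then $f(y)>f(x)$ would place $y$ in a level set that is an interval disjoint from $x$, forcing $f$ to strictly increase across one of $x$'s two neighbours, again contradicting local maximality. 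I would adopt whichever write-up turns out shorter once the details are filled in.
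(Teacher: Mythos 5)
Your proposal is correct and is essentially the paper's own argument: both proofs walk along the (one-dimensional) index path between $x$ and $y$ and apply the LMP of Theorem~\ref{Theorem:LMP} to consecutive distance-$2$ triples, propagating the ordering step by step until it meets the local-maximality hypothesis at a neighbour of $x$. The only cosmetic difference is that you argue by contradiction from a minimal counterexample and push strict inequalities backward from $y$, whereas the paper's Appendix~\ref{app:3} chains the inequalities forward directly, obtaining $F((P,R_{x^0}),A_{-i})\geq F((P,R_{x^1}),A_{-i})\geq\dots\geq F((P,R_{y}),A_{-i})$ along the same sequence.
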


This means that if LMP is satisfied for a discrete potential function, then the local optimality in the potential function implies global optimality. Thus, when at \ds{a} certain transmit power level, only one maximizer exists over the discrete set of data rate. The proof of the Proposition \ref{Propo:6} is shown in Appendix \ref{app:3}.

As \textbf{Theorem\ref{theorem2}} shows that when $R$ is fixed, the utility function admits one optimizer (maximizer). Hence, for exact potential game, the potential function also admits unique maximizer. Thus we have
\begin{theorem}
The maximizer in the action space $A$, namely $A^o = \argmax F(A_i,A_{-i})$ is unique and:
\begin{equation}
F(A^o,A_{-i}) = \max\{F((P^{\ast_{min}},R_{\yy{min}}),A_{-i}), ... F((P^{\ast_{\yy{max}}},R_{\yy{max}}),A_{-i}) \},
\end{equation}
where, \yy{$P^{\ast_x}$ is the maximizer when $R = R_x$}.
\end{theorem}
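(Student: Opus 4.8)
The plan is to split the maximization of the potential function $F$ over the hybrid action set $\overline{A}=(\overline{P},\overline{R})$ --- continuous in power, discrete in rate --- into a finite family of continuous concave subproblems indexed by the admissible rates, followed by a finite comparison over those rates. First I fix an arbitrary rate $R_x\in\overline{R}$. By \textbf{Lemma}~\ref{lemma1} the slice $\{(P,R_x):P\in\overline{P}\}$ is convex, and by \textbf{Theorem}~\ref{theorem2} the function $F$ (equivalently $U_P$, since $G_P$ is an exact potential game) is strictly concave in $P$ on that slice, so its supremum there is attained at exactly one point, namely the $P^{\ast_x}$ of the statement, and in particular a maximizer exists (this is \textbf{Theorem}~\ref{theorem1} restricted to the slice). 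Writing $\max_{A}F(A,A_{-i})=\max_{R_x}\max_{P}F((P,R_x),A_{-i})=\max_{R_x}F((P^{\ast_x},R_x),A_{-i})$ then yields the displayed value formula immediately, the inner supremum being replaced by its unique attainer, and existence of a global maximizer follows since the outer maximum is over the finite set $\overline{R}$.

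It remains to prove uniqueness, i.e. that the profiled discrete function $g(x):=F((P^{\ast_x},R_x),A_{-i})$ is maximized at a single index. Here I invoke \textbf{Theorem}~\ref{Theorem:LMP}: for every fixed power $F$ obeys the Large Midpoint Property along $\overline{R}$, and by \textbf{Proposition}~\ref{Propo:6} local optimality in the rate coordinate forces global optimality, so any rate beating its two immediate neighbours is globally optimal. Feeding this into the discrete-concavity results of \cite{ui_discrete_2008}, the set of rate maximizers is ``connected'', and under the strict form of the midpoint inequality it collapses to a singleton $x^{o}$. The companion power is the unique $P^{\ast_{x^{o}}}$ from \textbf{Theorem}~\ref{theorem2}, so $A^{o}=(P^{\ast_{x^{o}}},R_{x^{o}})$ is the unique maximizer of $F$; by the best-response potential lemma (following \cite{VOORNEVELD2000289}) it is then the unique Nash equilibrium of $G_P$, hence of $G$.

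The step I expect to be the main obstacle is precisely the passage from ``connected plateau'' to ``singleton'', i.e. ruling out ties between two distinct rates once the power has been optimized out: the LMP as stated only gives $\max_{|x-z|=|z-y|=1}f(z)\ge f(x)=f(y)$ in the equal case, which by itself tolerates plateaus. I would close this by combining the two strictness ingredients available. For candidate co-optimal rates $R_{x_1},R_{x_2}$, take the midpoint power $\bar P=\tfrac12(P^{\ast_{x_1}}+P^{\ast_{x_2}})\in\overline{P}$; strict concavity in $P$ (\textbf{Theorem}~\ref{theorem2}) makes $F((\bar P,R_{x_i}),A_{-i})$ strictly below the common optimum unless $P^{\ast_{x_1}}=P^{\ast_{x_2}}$, whereas if those optimal powers coincide the LMP applied at that shared power together with the strict monotonicity of the exponential $\mathrm{PDR}(P,R)$ model forces the intermediate rate strictly above the tie, contradicting co-optimality; so $x_1=x_2$. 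The remaining verifications --- the sign of $\partial^2 F/\partial P^2$, and checking the LMP inequality for the specific exponential $\mathrm{PDR}$ model --- are routine and already subsumed by the arguments cited for \textbf{Theorem}~\ref{Theorem:LMP} and \textbf{Proposition}~\ref{Propo:6}.
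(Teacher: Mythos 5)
Your overall route is the same as the paper's: maximize $F$ slice-by-slice over the finite rate set, use strict concavity in $P$ (Theorem~\ref{theorem2}) to obtain a unique $P^{\ast_x}$ on each slice, and then invoke the LMP (Theorem~\ref{Theorem:LMP}) together with Proposition~\ref{Propo:6} to argue that the resulting discrete profile over $\overline{R}$ has a single maximizing index. The displayed value formula, and the identification of $A^o$ with the unique Nash equilibrium via the best-response-potential property, are handled identically in the paper, so up to that point your write-up is a faithful and somewhat more explicit version of the paper's own argument.

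Where you depart from the paper is in your final paragraph, and there the patch does not hold up. First, the midpoint-power step points the wrong way: strict concavity of $F(\cdot,R_{x_i})$ in $P$ tells you that $F((\bar P,R_{x_i}),A_{-i})$ is \emph{at most} the slice optimum, with equality only if $\bar P=P^{\ast_{x_i}}$; knowing that the midpoint power is strictly suboptimal on each slice yields no contradiction with the two actions $(P^{\ast_{x_1}},R_{x_1})$ and $(P^{\ast_{x_2}},R_{x_2})$ being tied at the global maximum, since that tie lives at two different powers, not at $\bar P$. Second, the LMP is a statement about triples of rates at mutual distance one from a common midpoint, so it says nothing when the two candidate co-optimal rates are adjacent ($|x_1-x_2|=1$), and in the equal-value case at distance two it only delivers a weak inequality, which is exactly the plateau you are trying to exclude; the appeal to ``strict monotonicity of the exponential PDR model'' is an assertion, not an argument. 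That said, the paper itself simply declares cross-rate uniqueness (``hence, for exact potential game, the potential function also admits unique maximizer'') without addressing ties between distinct rates at all, so your proof is no weaker than the paper's on this point --- you have correctly located the one genuine gap, you just have not closed it. Closing it would require either a genericity assumption on the channel and coefficient values (so that the slice optima $\max_P F((P,R_x),A_{-i})$ take pairwise distinct values), or a strict form of the discrete-concavity results of \cite{ui_discrete_2008}.
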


 To show that the equilibrium of the \ds{propose potential} game is unique, it is sufficient to prove that the set of maximizers of the potential function is singleton.  Therefore, the best-response for the \ds{potential} game $G$, $A^\ast = (P^\ast,R^\ast)$ is equals to $ A^o $, which is also unique.

\subsection{Game Efficiency}

The Nash Equilibrium solution of each individual BAN in the game $G$ is the maximization of its own utility. This leads to the problem of efficiency of the network. More specifically, for a network without \ds{a central} coordinator, the fairness of the system may degrade due to selfish actions of the players. Thus, it is important to investigate the equilibrium efficiency among the co-existing WBANs. The social welfare reflects the fairness and efficiency of the system's best response, considering all individuals utility combined.

\begin{definition}
The social welfare is defined by the aggregation sum of each WBAN's utility function as:

\begin{equation}
\Omega(\mathbf{A}) = \sum_{i=0}^n U_i(\mathbf{A})
\end{equation}

\end{definition}

The maximization of the social welfare is \ds{the} social optimum, which represents the social fairness among the system. The price of anarchy ($\mathrm{PoA}$) is used to measure the inefficiency of equilibriums among selfish players. With finite number of players in game $G$, the $\mathrm{PoA}$  \footnote{Since the Nash Equilibrium is unique, the the price of anarchy equals to the price of stability(PoS)} is defined as the ration of the highest value of social welfare (social optimum) to the NE (as the Nash Equilibrium in $G$ is unique) of the game:

\begin{equation}
\mathrm{PoA} = \frac{ \Omega(\bf{A} ^{opt})}{ \Omega(\bf{A^{\ast}})} \geq 1,
\end{equation}

where $\bf{A} ^{opt} = \argmax \Omega(\bf{A})$ is the global optimum solution.


\section{The Adaptive Backoff Game}\label{Sec:5}
By implementing the proposed back-off algorithms, the collision during the frame transmission can possibly be avoided. Because, before each transmission, each WBAN waits for a random time, based on Contention Window size. This mechanism space out repeated retransmissions of the data in each WBAN. Generally, each WBAN is able to tune their transmission probability by modifying the back-off control parameters, such as $CW_{min}$ value and maximum back-off stages ($m$ value). Therefore, \ds{each} WBAN can dynamically choose a suitable contention window size according to the contention level of current network in order to \yy{effectively} improve system performance.


However, due to the non-cooperative nature of the system, each selfish player attempts to increase its utility by increasing its transmission probability or equivalent by decreasing its contention window size. Increasing the transmission probability by one player \ds{encourages} other players to shorten their contention window sizes, which \ds{increases} collision\ds{s}, thus the delay and packets drop ratios are \yy{also increased}. Here, we proposed the Contention Game $G_{\textrm{CW}}$ based on the aforementioned MAC layer scheduling, which aim\ds{s} to balance the trade-off between packet delay and system throughput.

In the Contention Game, \ds{the} action selected by any player is their minimum contention windows size \yy{$CW_{min}$, where $CW_{min}$ is} the  action space. As \ds{described by} the Markov Model, by changing the contention window size, \ds{players} transmission probability can be adjust\ds{ed} accordingly. As, \ds{in} \ds{a} high $\mathrm{PDR}$ regime, we have the following approximation:
\begin{equation}
\tau \approx \frac{1}{p_f \cdot CW_{min} + 1}\ds{,}
\end{equation}

\yy{where $p_f$ is transmission failure probability.}



\yy{Emp\ds{i}rically}, in order to get the value of $n$, each node can measure \yy{$p_f$} and $\tau$ through several counters independently. The number of coexisting WBANs can be estimated from the following equations\cite{ghazvini2013game}:
\begin{equation}\label{eqn:number est}
\begin{aligned}
\tau_{est}=\frac{Transmitted Fragment Count}{Slot Count}\\
p_{est}=\frac{Ack Failure Count}{Transmitted Frament Count},\\
\end{aligned}
\end{equation}


where $p_{est}$ and $\tau_{est}$ denote the estimated $\tau$, failure probability respectively.

$\mathrm{Transmitted Fragment Counter}$ that counts the total number of successfully transmitted data frames,
$\mathrm{ACK Failure Counter}$ that counts the total number of unsuccessfully transmitted data frames and the $\mathrm{Slot Counter}$ that counts the total number of experienced time slots. (The historical data can be used to estimate the current parameters, and the length of how far we should trace back can be adjusted accordingly).

\begin{algorithm}
\caption{Main steps for Adaptive Backoff Game}\label{Algorithm:3}
\begin{algorithmic}[1]
\State After execute line \ref{opretion:game Start} in Algorithm \ref{Algorithm:2}
\State The Hub estimated the number of coexisting WBANs $n_{est}$ by using equation \eqref{eqn:number est}
\State After obtain $n_{est}$. The Utility Function $V(CW_{min})$ can be constructed as a function of $w$, where the $\mathrm{PDR}$ is obtained as $\mathrm{PDR}(\bf{P_{-i}^{\ast},R_{-i}^{\ast}})$ from the Nash Equilibrium in the Link Adaptation Game.
\State Determine the minimum contention window size $CW_{min}$ for that given sensor, which gives the Nash Equilibrium value of the Utility Function $V(CW_{min})$
\State \Goto{opretion:game end} in Algorithm \ref{Algorithm:2}
\end{algorithmic}
\end{algorithm}

The objective of the game is to reach a trade-off in maximizing the throughput, and \ds{minimize} delay. \ds{Following from the} analytical model, the throughput of each WBANs is positively correlated with the Goodput(S) in equation \eqref{equation:Goodput}. The average delay for a packet to be transmitted successfully is estimated as:

\begin{equation}
\begin{aligned}
D=& \sum_{i=1}^m P_{Bi} E[D_i] \\
 =& \sum_{i=1}^{m-1} [p_f^i(1-p_f)\sum_{j=0}^i (\frac{W_j+1}{2}\yy{T_{\text{slot}}} + T_{s}) ] \\
  & + p_f^m \sum_{j=0}^m  (\frac{W_j+1}{2}T_{\text{slot}}  + T_{s}),
\end{aligned}
\end{equation}

where $ E[D_i]$ is the average delay in state $i$. It is obvious that throughput, delay may have different units in different ranges, and they have to be normalized. Therefore, the utility function is defined as the following:

\begin{equation} \label{eq:CW_Ult}
V_i (CW_{min}) = d \cdot S - l \cdot D - P_{\textrm{Drop}},
\end{equation}

where $P_{\textrm{Drop}} = \tau P^{m+1}$ is the probability that \yy{a packet drops} due to exceed maximum retry limits. The weights $d,l$ can be adjusted based on different scenarios. The obtained results have shown that in game $G_{CW}$, each user improves its \ds{chance of} successful transmission by increasing transmission probability, whil\ds{st} this increase \yy{of transmission probability} causes an \ds{increase} in collision probability, as well. Such collisions will \yy{cause large delay in packets transmission} and \ds{energy} wastage led by $\mathrm{PDR}$ reduction. Thus, in case of less contending nodes(or high SINR regime), the nodes should select a smaller $CW_{min}$ as the best strategy. In the case of more contending nodes(or low SINR regime), greater $CW_{min}$ is more appropriate in order to reduce the collision probability. The game \ds{is} implemented in a similar distributed manner \ds{to} the Link Adaptation Game.

\subsection{Game \yy{P}roperty}

\yy{The existence and uniqueness of the Nash equilibrium point for the Adaptive Backoff game is guaranteed. The proof is given as follows.}

\subsubsection{Existence and uniqueness of the Nash equilibrium}

 \begin{theorem}\label{Theorem:Ult_contention}
 (Existence and Uniqueness): In each stage of the game $G_{CW}$ exists a unique Nash Equilibrium.
 \end{theorem}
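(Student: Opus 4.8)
The plan is to reuse, almost verbatim, the two-stage template of Section~\ref{Sec:4}: first establish existence via the Debreu--Glicksberg--Fan fixed-point theorem on the action space of contention windows, then obtain uniqueness from strict concavity of each player's payoff in its own action. I would relax the action set to the compact interval $CW_{min}\in[CW^{\mathrm{lo}}_{\min},CW^{\mathrm{hi}}_{\min}]$, which, being a single real coordinate, is nonempty, convex and bounded exactly as in Lemma~\ref{lemma1}; the genuinely discrete case is then recovered by the LMP/discrete-concavity argument of Theorem~\ref{Theorem:LMP} and Proposition~\ref{Propo:6}, since for that machinery it suffices that the continuous extension of $V_i$ be concave in $CW_{min}$. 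Continuity of $V_i$ in $CW_{min}$ is immediate from \eqref{equation:Goodput} and the delay expression $D$, because $\tau$ depends smoothly on $CW_{min}$ through the operating-regime approximation $\tau\approx 1/(p_f\,CW_{min}+1)$ and every term of $V_i$ is a rational or polynomial function of $\tau$ with non-vanishing denominator on the operating range. This yields existence of a pure-strategy Nash equilibrium.

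For uniqueness, the key structural observation is that when the other players' windows (hence their transmission probabilities $\tau_j$, $j\neq i$) are held fixed, the failure probability experienced by player $i$, essentially $(1-\mathrm{PDR})\bigl(1-\prod_{j\neq i}(1-\eta_\tau\tau_j)\bigr)$, does not depend on $CW_{min,i}$, and $\mathrm{PDR}$ is frozen at its Link-Adaptation-Game equilibrium value $\mathrm{PDR}(\mathbf{P^{\ast}_{-i}},\mathbf{R^{\ast}_{-i}})$. Hence $V_i$ is a function of the single variable $x:=CW_{min,i}$, with $\tau_i(x)=1/(p_f x+1)$ strictly decreasing and convex, the goodput $S=S(\tau_i)$ a strictly increasing concave M\"obius function of $\tau_i$ (since $\tau T_s+(1-\tau)T_{\text{slot}}$ is affine in $\tau$), $P_{\mathrm{Drop}}=p_f^{m+1}\tau_i$ affine in $\tau_i$, and $D$ affine in $x$ for fixed $p_f$ because $W_j=\lambda^j CW_{min}$ is linear in $x$. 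I would then differentiate $V_i$ twice in $x$, treating the $-lD$ term (affine), the $d\,S$ term, and the $-P_{\mathrm{Drop}}$ term (concave, since $\tau_i(x)$ is convex with a negative coefficient) separately, and collect signs to conclude $\partial^2 V_i/\partial x^2<0$ on the operating regime, so $V_i$ is strictly concave in its own action. Single-valued best responses follow; to pass from unique best responses to a unique equilibrium I would invoke Rosen's diagonal-strict-concavity criterion, or equivalently show that the (symmetric) best-response map, viewed as a map of the opponents' aggregate $\sum_{j\neq i}\eta_\tau\tau_j$, is a contraction with a unique fixed point.

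The main obstacle is precisely the second-derivative sign check for $S$ as a function of $x$: with $S=f(\tau_i(x))$, $f$ increasing and concave while $\tau_i$ is decreasing and convex, one gets $S''=f''(\tau_i)(\tau_i')^2+f'(\tau_i)\tau_i''$, a sum of a negative and a positive term, so $S$ is not unconditionally concave in $x$. I would resolve this by restricting to the stated high-$\mathrm{PDR}$, low-collision operating point, where $p_f$ is small and $\tau_i$ is close to $1$, and bounding $|f''|(\tau_i')^2$ against $f'\tau_i''$ using $T_s\gg T_{\text{slot}}$; alternatively, one can weaken ``strictly concave'' to ``strictly quasi-concave'' (unimodality of $d\,S-l\,D-P_{\mathrm{Drop}}$ in $x$), which still gives single-valued best responses and, together with their monotone dependence on the opponents' aggregate, a unique equilibrium. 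The remaining ingredients --- compactness, continuity, and the discrete-to-continuous passage --- are routine and parallel Section~\ref{Sec:4} step for step.
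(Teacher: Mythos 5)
Your proposal follows essentially the same route as the paper's Appendix~\ref{app:1}: existence from continuity of $V_i$ over a nonempty, compact, convex set of contention-window values (the paper cites Rosen/Debreu at exactly this point), and uniqueness from strict concavity of $V_i$ in the player's own $CW_{min}$, obtained by differentiating the three terms $d\cdot S$, $-l\cdot D$ and $-P_{\text{Drop}}$ separately through $\tau\approx 1/(p_f\,CW_{min}+1)$, with $D$ affine and $\partial^2 P_{\text{Drop}}/\partial CW^2>0$. Where you differ is in how the two soft spots are handled, and in both cases your treatment is the more careful one. First, you correctly flag that $S=f(\tau(x))$ with $f$ increasing concave and $\tau$ decreasing convex gives $S''=f''(\tau')^2+f'\tau''$ of indefinite sign; the paper's computation presents both contributions of $\partial^2 S/\partial CW^2$ as positive and then disposes of the issue with the single sentence that ``when $d$ is set in a reasonable range'' the total second derivative is negative --- i.e., it tunes the weight $d$ so that the $-2\tau^3 P^{m+1}$ term from $P_{\text{Drop}}$ dominates, which is in effect the same regime/weight restriction you propose, but asserted without any bound. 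Second, you observe that strict concavity in own action only yields single-valued best responses and add a Rosen diagonal-strict-concavity or contraction step to reach uniqueness of the equilibrium; the paper omits this entirely and cites a textbook for ``differentiable and strictly concave $\Rightarrow$ unique NE,'' which does not hold without such an extra condition. Your discrete-to-continuous passage via the LMP machinery is additional apparatus the paper does not invoke here (it silently treats $CW_{min}$ as a continuous convex set). In short: same skeleton, but your version explicitly identifies, and would still need to close, two gaps that the paper's proof papers over.
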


\begin{proof}
Similar with the Link Adaptation Game $G$, the utility function in $G_{CW}$ is differentiable and strictly concave over the convex \yy{set} of the minimum contention window size $CW_{min}$. Therefore, according to \cite{han_game_2012}, the game $G_{CW}$ admits a unique Nash Equilibrium. The details can be found in Appendix \ref{app:1}.
\end{proof}

\subsubsection{Efficiency}
Similarly with the Link Adaptation Game the social welfare of the Adaptive Backoff Game $G_{CW}$ is defined as :
\begin{equation}
\Omega_{CW} = \sum_{i=0}^{n} V_i
\end{equation}

With finite number of players in game $G_{CW}$ the $\mathrm{PoA}$ is defined as the ration of the highest value of social welfare (global optimization) to the NE (as the Nash Equilibrium in $G_{CW}$ is unique) of the game:

\begin{equation}
\mathrm{PoA_{CW}} = \frac{ \Omega_{CW}(\bf{B} ^{opt})}{ \Omega_{CW}(\bf{B^{\ast}})} \geq 1,
\end{equation}

where $\bf{B} ^{opt} = \argmax \Omega_{CW}(\bf{B})$ is the global optimum solution.

\section{SIMULATION ANALYSIS}\label{Sec:6}

This section discusses the simulation results of our proposed MAC layer games in contrast with conventional schemes as well as game-theory based methods in the literatures. To evaluate and validate the performance of the proposed game, we compare throughput, energy efficiency and delay with $\mathrm{B}^2$RIS\cite{grassi2012b2irs} and Adaptive CSMA/CA \cite{xia2013service} in respect of varying numbers of consisting WBANs in our simulations. The value of MAC layer Parameters are listed as below, which is mainly based on the IEEE802.15.6 standard (described in section \ref{802.15.6}). In addition, a non-linear power estimation \cite{davenport2009medwin} is also made to measure the actual circuit \ds{energy} consumptions, \ds{to provide a more realistic} evaluation of the system.

\begin{table}[!htb]

\centering
\caption{MAC Parameter }
\begin{tabular}{|c|c|}
\hline
Parameters & Value \\ \hline
   Superframe Length & 80 $ms$ \\ \hline
Allocated Time Slot Length  & 0.312 $ms$   \\ \hline
 Minimum Data Rate &   \SI{25.6}{\kilo\bit\per\second}  \\ \hline
 Maximum Data Rate &  1.28 Mbits $\text{s}^{-1}$  \\ \hline
Payload & 175 bytes   \\ \hline
 $N_{\textrm{MAC Hdr}}$ & 24 bits    \\ \hline
  $N_{\textrm{MAC Ftr}}$ & 24 bits    \\ \hline
    $N_{\textrm{Beacon}}$ & 20 bytes    \\ \hline
        $N_{\textrm{ACK}}$ & 10 bytes    \\ \hline
     $CW_{\textrm{min}}$ & 43.75 ms    \\ \hline
      MaxBackoffLimits (m) & 4    \\ \hline
\end{tabular}
\label{tab:BPSKTable}
\end{table}

\subsection{Scenario 1}
In this scenario, a realistic measurement \cite{WBAN2017Measure} using small body-mounted "channel sounder" radios that operated at 2.36 Ghz is ad\ds{o}pted in the simulation. The measurement set contains both inter-WBAN and intra-WBAN channels of the co-exi\ds{s}ting WBANs, which are measured on human wearers in many different environments, involving subjects doing a mix of distinctive everyday activities. It should also be noted that the measurements are re-sampled by the parameters above, thus the channel attenuations remain constant in each superframe \yy{\cite{DBLP:journals/corr/abs-1305-6992}}.

\begin{figure}
\centering
\includegraphics[width=10cm,height=6cm]{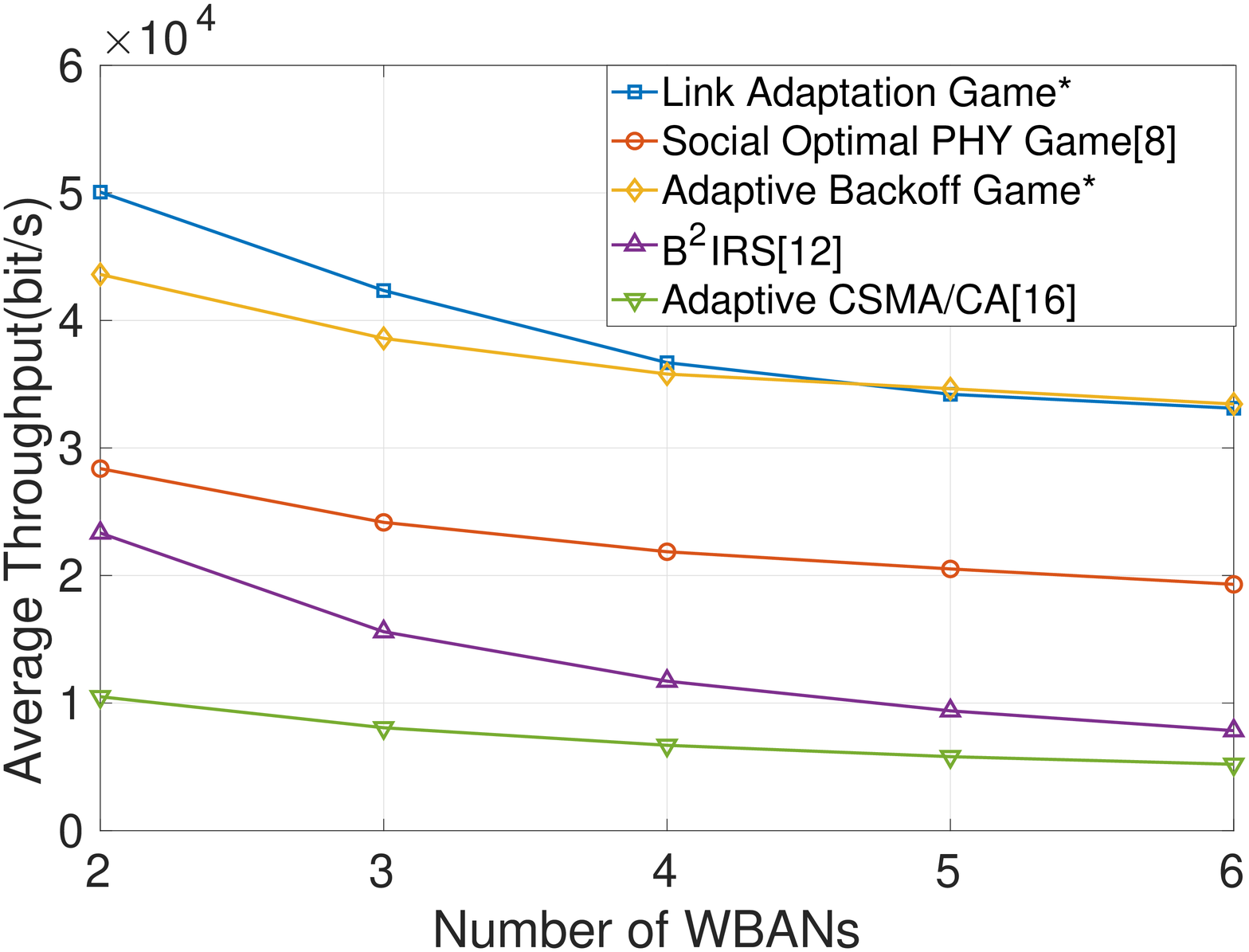}
\caption{Throughput Performance \yy{of the proposed games compared to other methods} under \yy{Realistic} Measurement Channel Sets}
 \label{fig:TP_M6}
\end{figure}

\begin{figure}
\centering
\includegraphics[width=10cm,height=6cm]{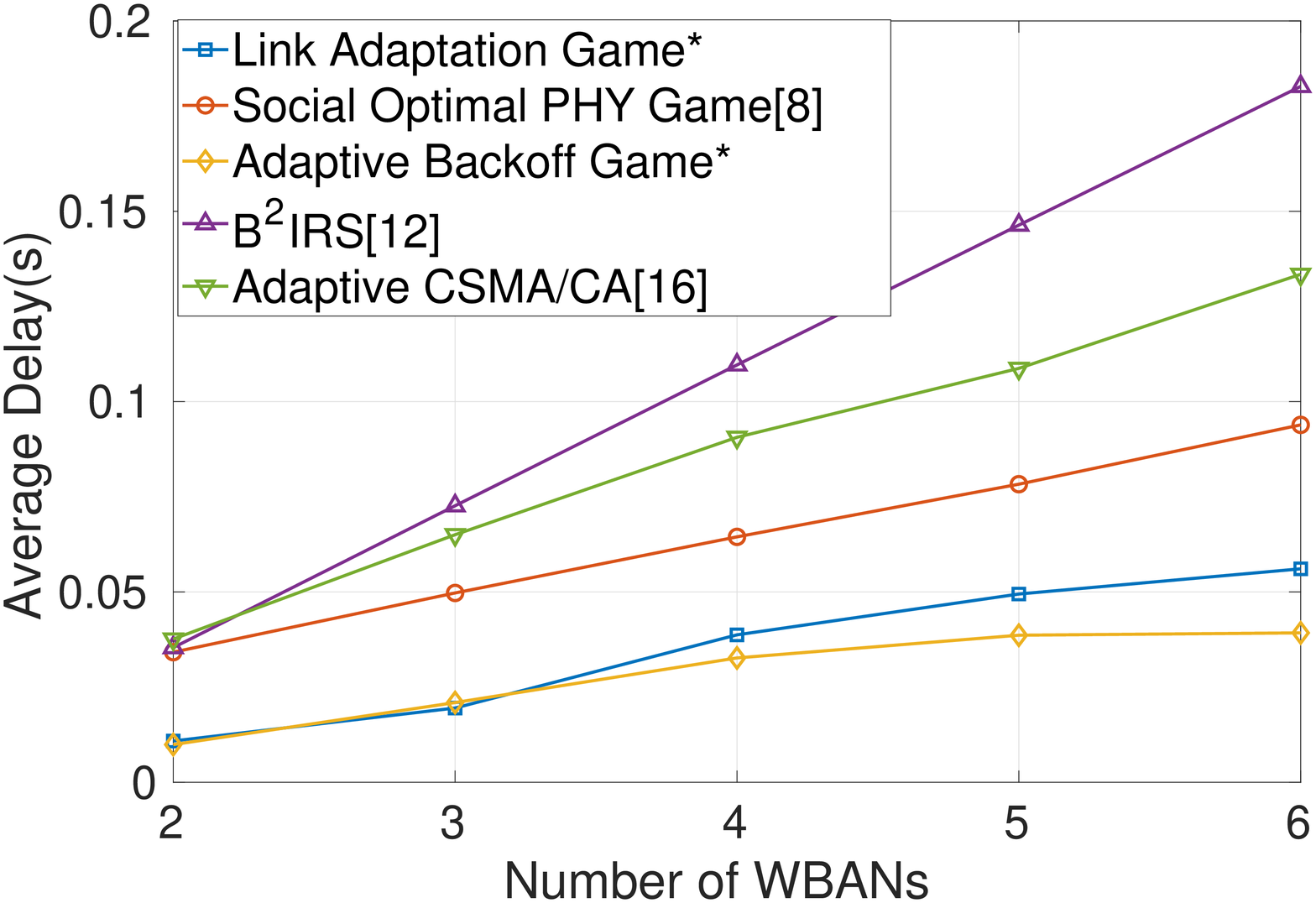}
\caption{Delay Performance \yy{of the proposed games compared to other methods} under \yy{Realistic} Measurement Channel Sets}
 \label{fig:Delay_M6_new}
\end{figure}

\begin{figure}
\centering
\includegraphics[width=10cm,height=6cm]{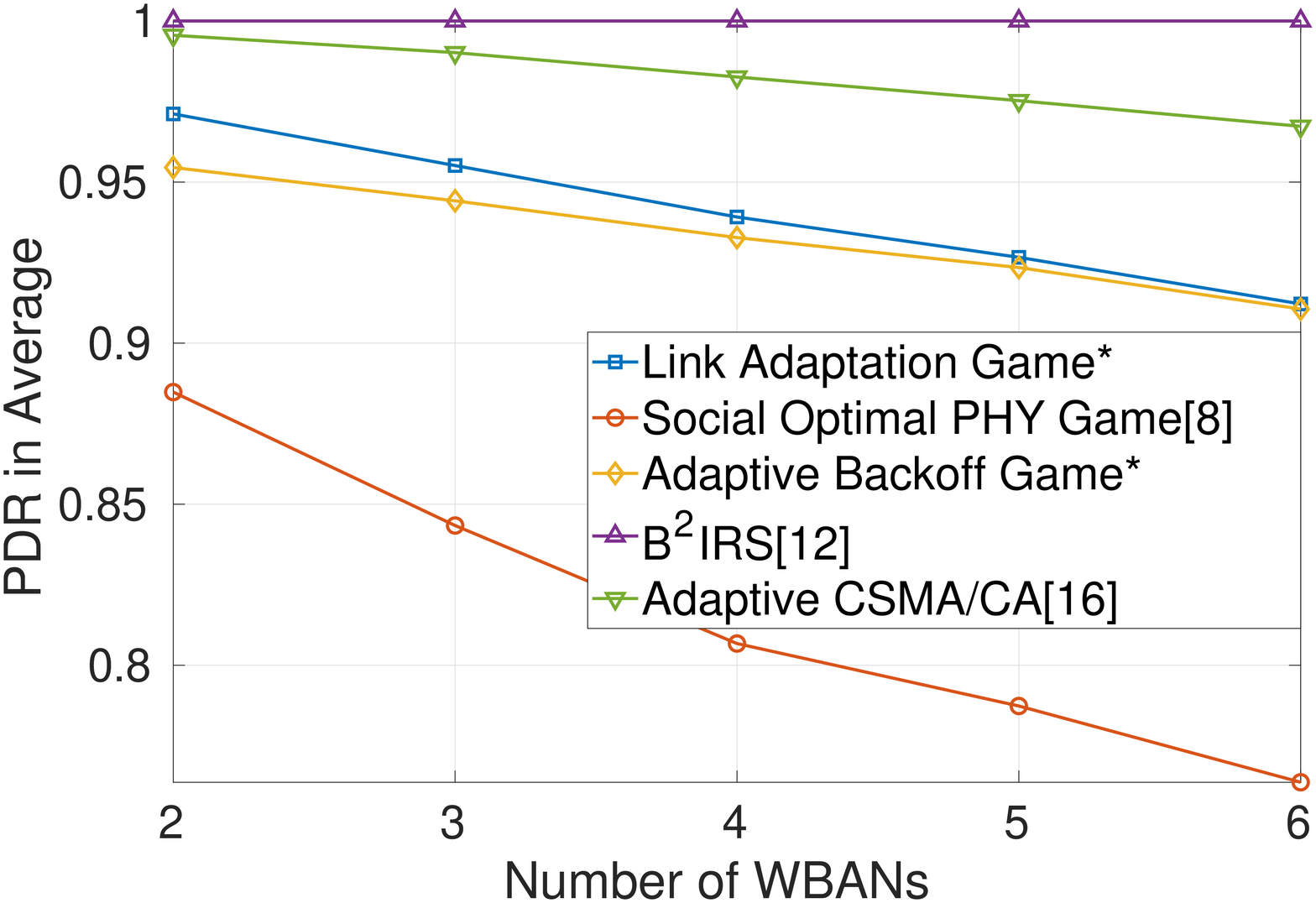}
\caption{PDR performance \yy{of the proposed games compared to other methods} under \yy{Realistic} Measurement Channel Sets}
 \label{fig:PDR_M6}
\end{figure}

\begin{figure}
\centering
\includegraphics[width=10cm,height=6cm]{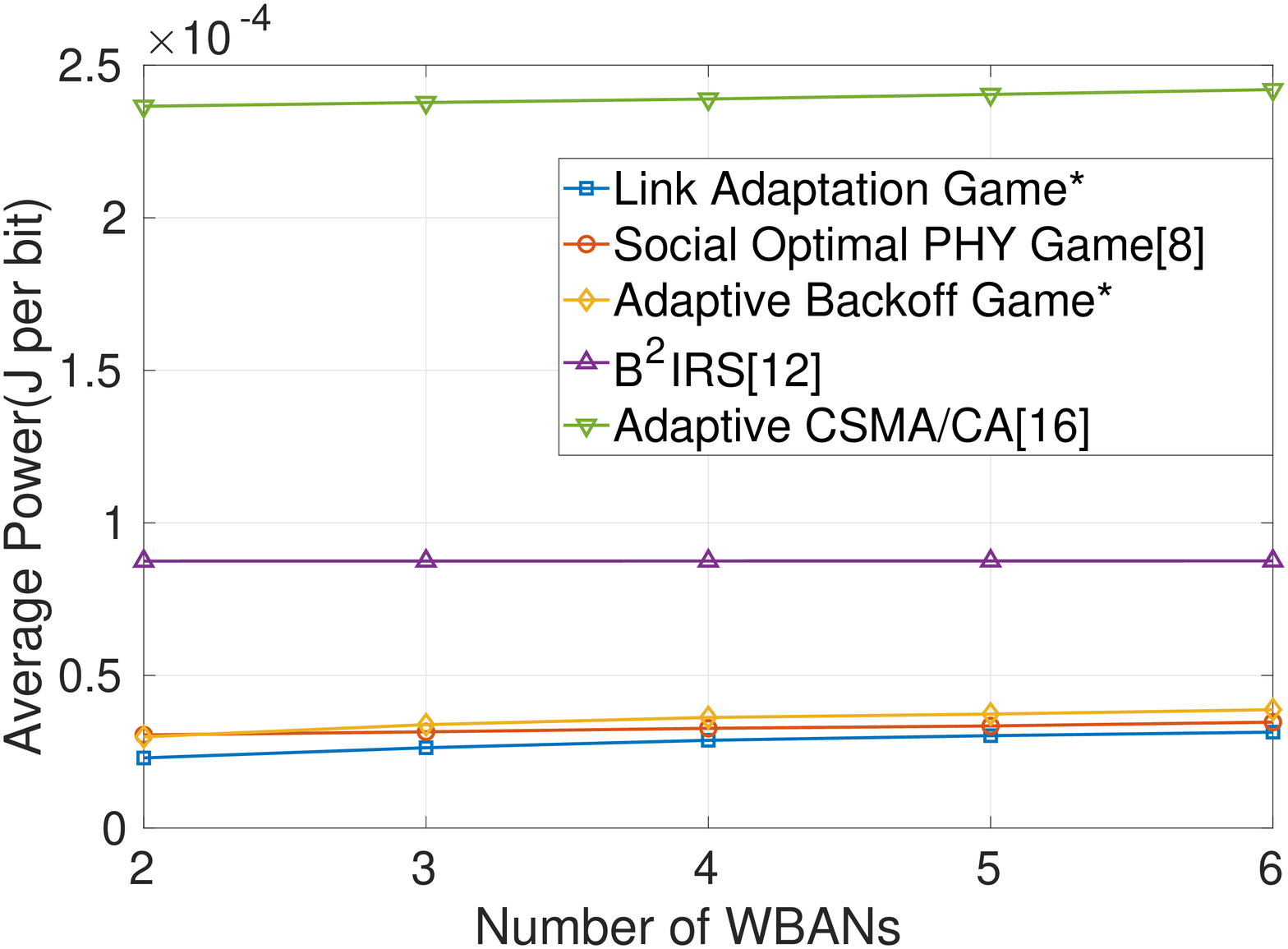}
\caption{\yy{Circuit} power consumption \yy{of the proposed games compared to other methods} under \yy{Realistic} Measurement Channel Sets}
 \label{fig:Power_M6}
\end{figure}

Figures \ref{fig:TP_M6} and \ref{fig:Delay_M6_new} show the \ds{variation in t}hroughput and delay\ds{, respectively,} with different numbers of WBAN co-existing. Fig. \ref{fig:TP_M6} shows that the throughput is much reduced \ds{with} more WBANs in the system as the collision probability is increased. In $\mathrm{B}^2$IRS, the  packets are rescheduled  in a collision-free manner, hence no interference occurs.
The proposed games provide lower delay and higher throughput compare\ds{d} with other methods. The Adaptive Backoff Game outperforms the Link Adaptation Game in terms of delay and average throughput \ds{although the Link Adaptation Game is superior in terms of power consumption and $\mathrm{PDR}$}. \ds{H}owever, when larger number of WBANs are co-located the advantage of the contention window length game will be more obvious, which is shown in the next section.

\subsection{Scenario 2}

\ds{T}he measurement set \ds{used} can only provide channel gain of up to 6 WBANs coexisting, \yy{computer-simulated} channels are also needed to obtain performance analysis under crowded environment where \ds{many} more WBANs are co-located.

\begin{figure}
\centering
\includegraphics[width=10cm,height=6cm]{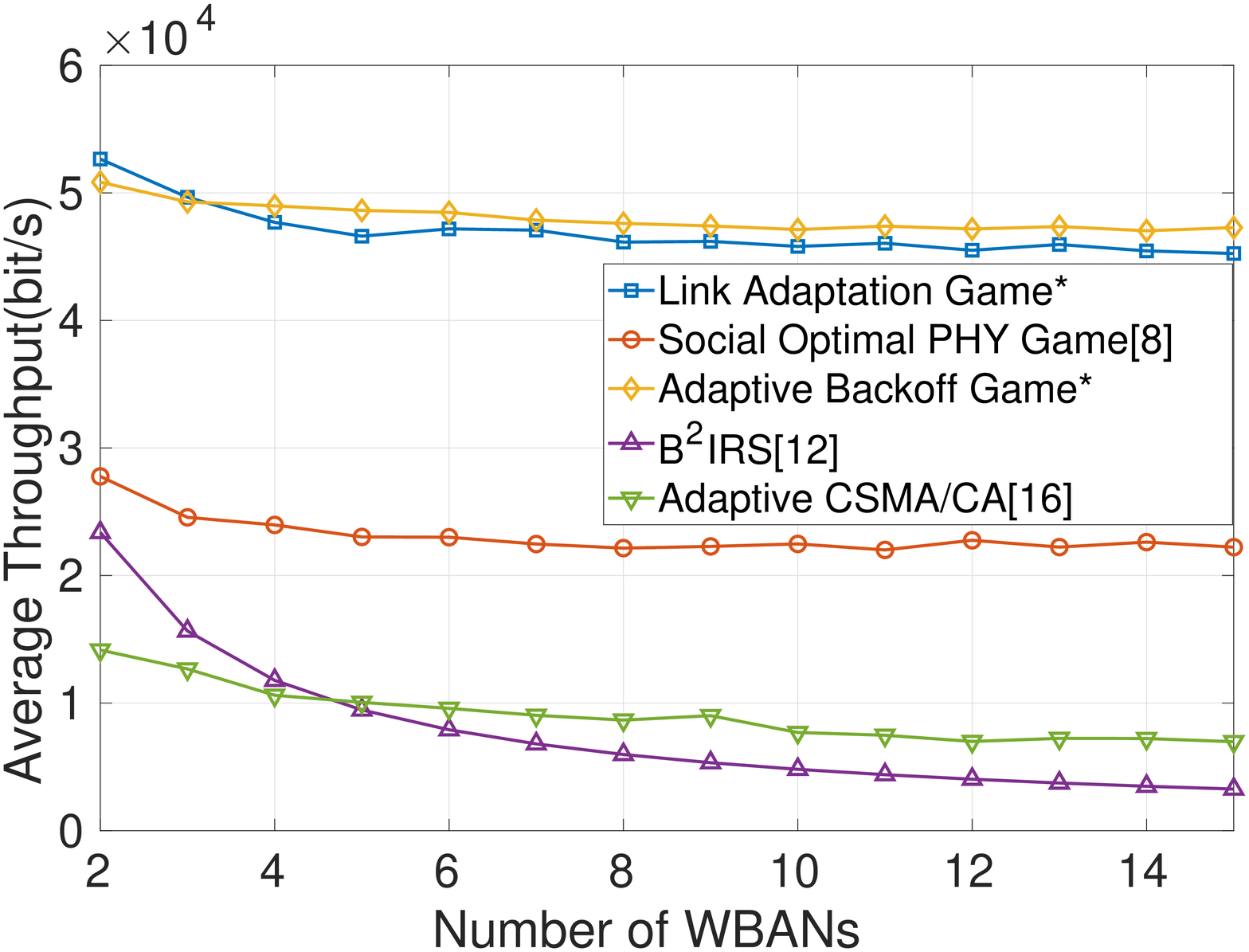}
\caption{Throughput performance \yy{of the proposed games compared to other methods} under \yy{Simulated} Channel Sets}
 \label{fig:TP_M15}
\end{figure}

\begin{figure}
\centering
\includegraphics[width=10cm,height=6cm]{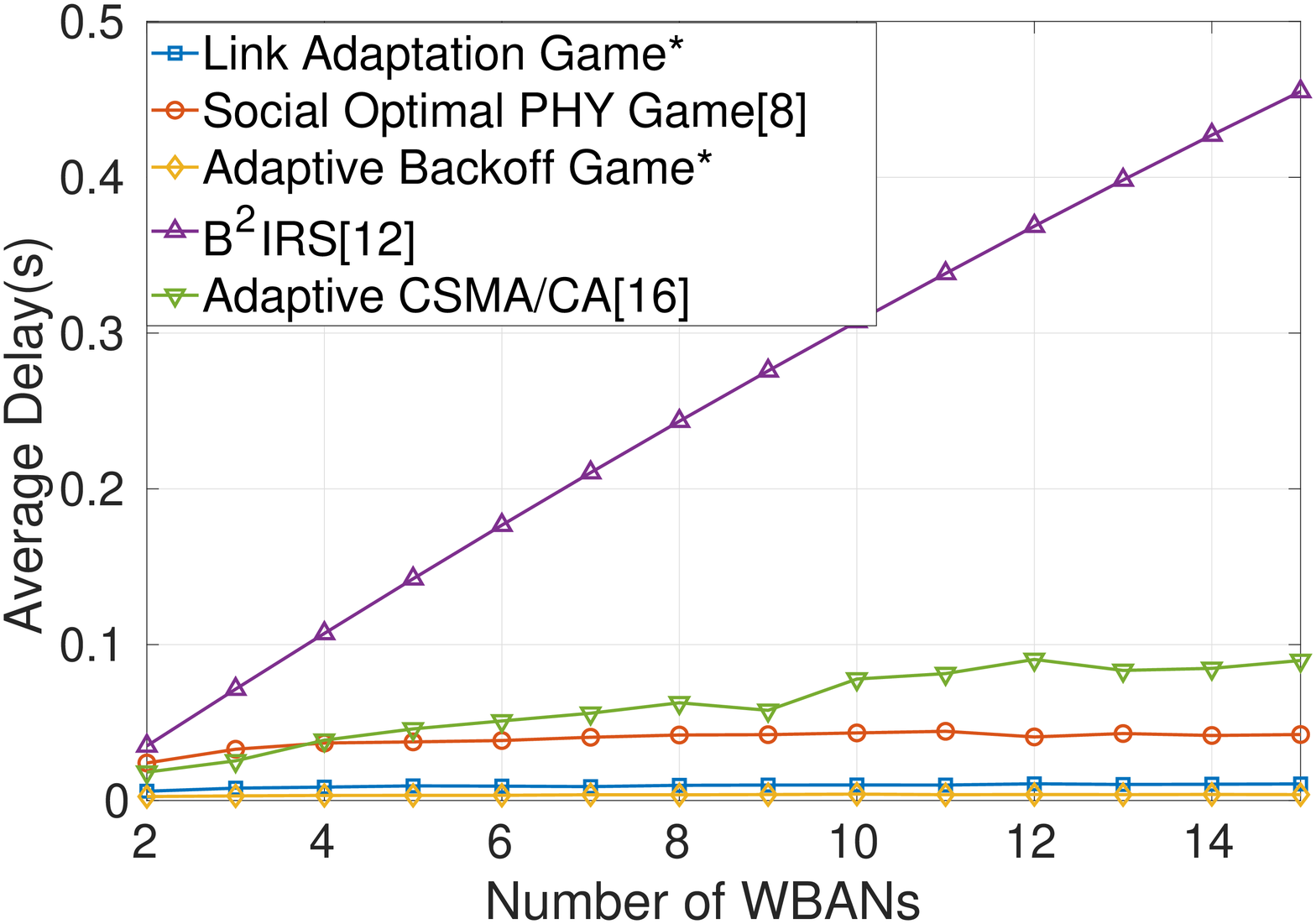}
\caption{Delay performance \yy{of the proposed games compared to other methods} under \yy{Simulated} Channel Sets}
 \label{fig:DL_M15}
\end{figure}
\begin{figure}
\centering
\includegraphics[width=10cm,height=6cm]{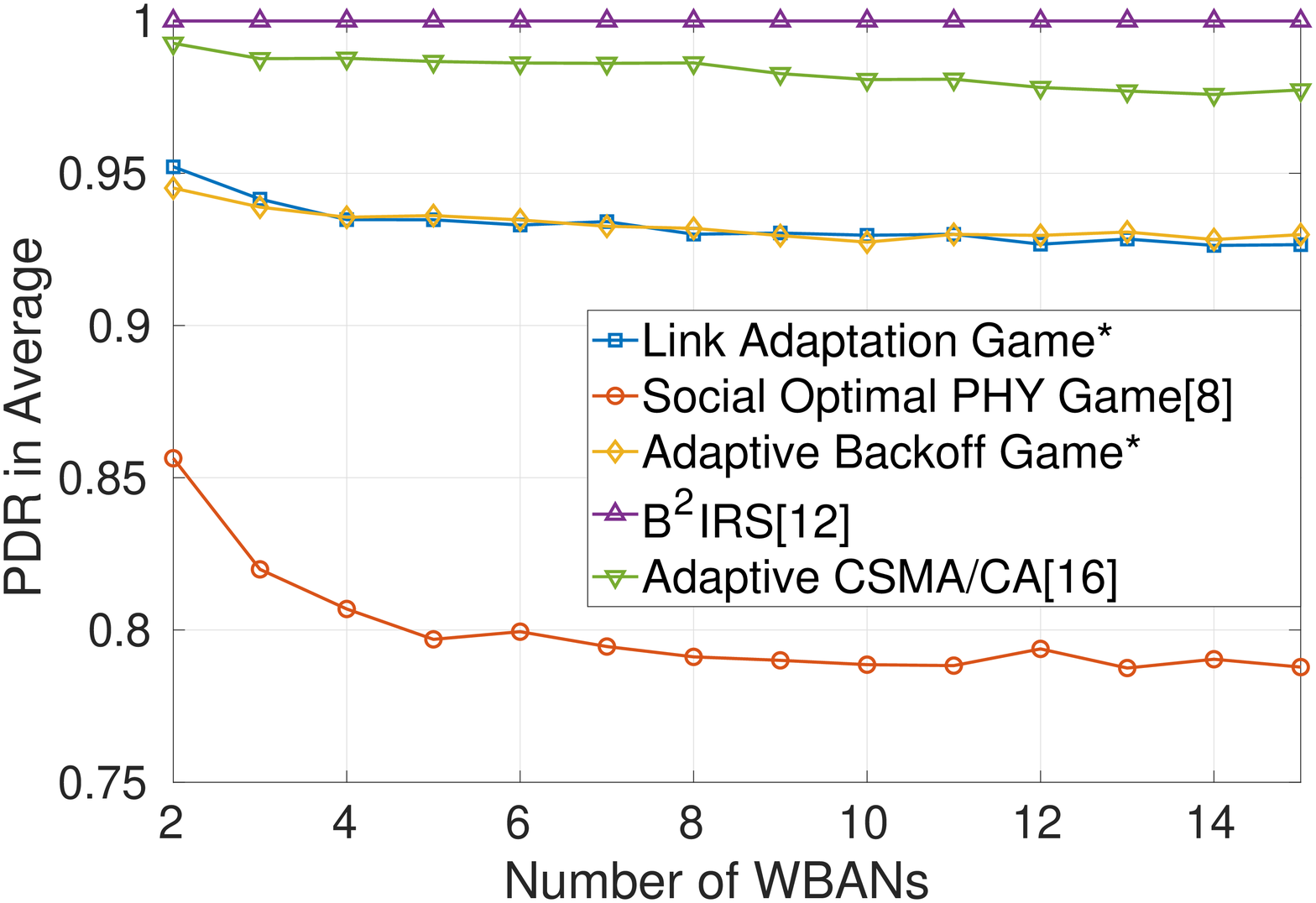}
\centering
\caption{PDR performance \yy{of the proposed games compared to other methods} under \yy{Simulated} Channel Sets}
 \label{fig:TP_M15}
\end{figure}

In the simulation, both intra-BAN and inter-BAN channels are modeled. It is assumed that up to 15 WBANs with the same topology are coexisting and moving randomly within a $6 \times 6 \ m^2$ square area. The walking speed of the WBAN wearer is modeled as  $ 0.5 \pm 0.1 $ ~m/s, which is updated every 1 ms. The channel attenuation is modeled as $h_i^j=A_t(d_o/d_i^j)^{(2.5/2)}A_{SE}A_{SC}$,  where the path loss exponent is 2.5. $d_i^j$ represents the distance between BAN $i$ and $j$, and the reference distance $d_o = 5m$ corresponds to a channel attenuation of 50~dB. The shadowing effect $A_{SE}$ is assumed to be 42dB, and a Jakes model with Doppler spread of 1.1 Hz as the $\mathcal{CN}(0,1)$ Rayleigh distrib\ds{ed} small scale fading $A_{SC}$ between WBANs. Gamma fading with a mean 65 dB attenuation, a shape parameter of 1.31, and a scale parameter of 0.562 is employed for the on-body channels.

Basically, for the proposed methods, less than $10\%$ of the packets are blocked. Again, $\mathrm{PDR}$ performances of $B^2$IRS  and adaptive CSMA/CA are better than others, because of the low collision probabilities in these two methods.

In Figure \ref{fig:DL_M15}, th\ds{ere is increasing delay} with increase in the  number of co-located WBANs. Amongst three game-theory-based methods, Social Optimal PHY Game has the highest delay due to large re-transmissions. The Adaptive Backoff Game provides smallest packet delay, and the delay time is increased a small amount at higher interference regime. At the same time, \yy{$\mathrm{B}^2$}IRS has the largest delay due to complexity of beacon re-scheduling when greater number of WBANs are co-existing.

As described in Figure \ref{fig:TP_M15}, when more than 4 WBANs co-exiting, the Adaptive Backoff Game can provides higher throughput. Both two proposed methods have significantly larger throughput respect to other methods, as higher date rate are more preferable in the game at relative better channel conditions.

\begin{figure}
\centering
\includegraphics[width=10cm,height=6cm]{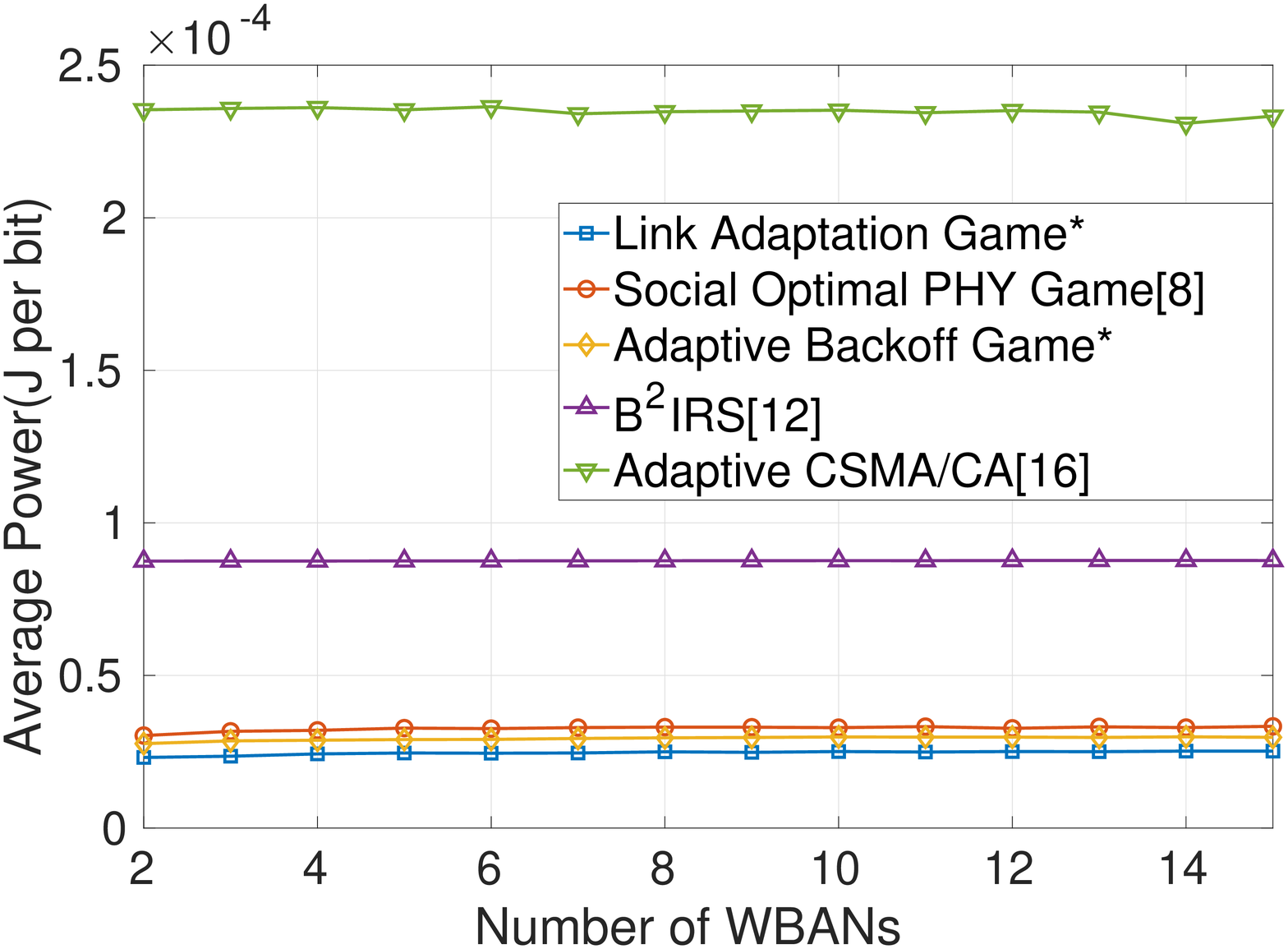}
\caption{Circuit power consumption \yy{of the proposed games compared to other methods} under \yy{Simulated} Channel Sets}
 \label{fig:P_Con_M15}
\end{figure}
By applying the non-linear circuit power mapping \cite{davenport2009medwin}, the circuit power consumptions can be estimated. The proposed Link Adaptation Game method provides the lowest power consumption in terms of Joules per bit (J/bit). The $\mathrm{B^2}$IRS consumes 10 times more power per bit transmitted. However, the Adaptive Backoff Game uses slightly larger power than the Link Adaptation Game. This is because, in the Adaptive Backoff Game, when the contention windows size is small, more packet\ds{s} are transmitt\ds{ed} concurrently, hence the transmitter uses larger power to achieve reasonable $\mathrm{PDR}$.

\subsection{Efficiency of the Game}

\yy{Social efficiency is \ds{a} key measurement for a reliable and efficient game design.} We evaluate the Price of Anarchy \yy{($\mathrm{PoA}$)} of the two prosed games by implementing a Monte Carlo simulation on time varying channels, where \ds{an} interior point \yy{approach} is applied to find the centralized (global) optimum of the social welfare. Although, different WBANs decide their actions in different time slots, we assume the global maximization point is solved instantaneously \ds{despite} MAC layer scheduling.
  \begin{figure}
\centering
\includegraphics[trim=0cm 0cm 0cm 0cm,clip=true,width=10cm,height=6cm]{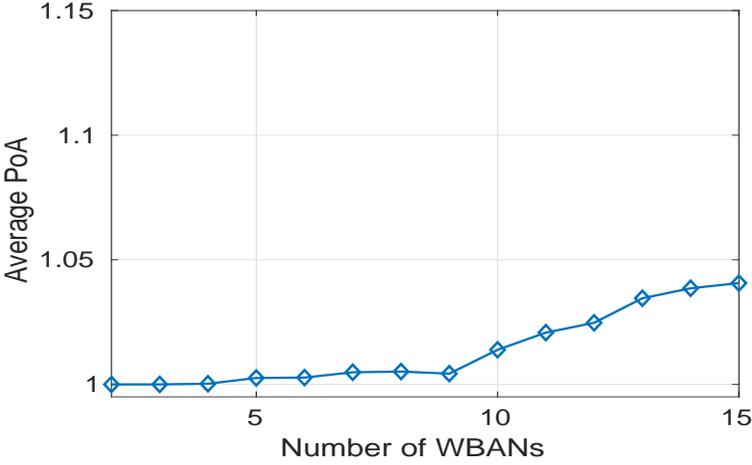}
\caption{$\mathrm{PoA}$ in Link Adaptation Game}
 \label{fig:PoA_R}
\end{figure}

\begin{figure}
\centering
\includegraphics[trim=0cm 0cm 0cm 0cm,clip=true,width=10cm,height=6cm]{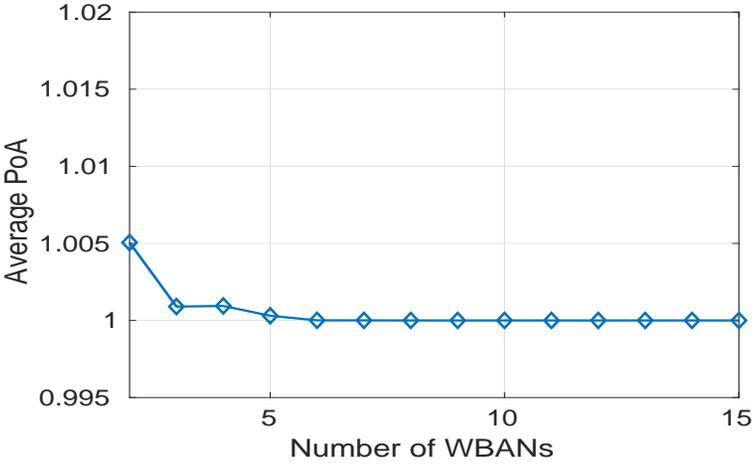}
\caption{$\mathrm{PoA}$ in Adaptive Backoff Game}
 \label{fig:PoA}
\end{figure}

\ds{Figs. \ref{fig:PoA_R} and \ref{fig:PoA}} above illustrate the $\mathrm{PoA}$ for different number\ds{s} of co-existing WBANs. It can be seen that the loss due to decentralization is relatively small as $\mathrm{PoA} \to 1$. Also, in general , the system waste less than $10\%$ of their welfare in terms of utility for not being coordinated from both Fig. \ref{fig:PoA_R} and Fig. \ref{fig:PoA}. Meanwhile, we introduce a new metrics $L$,$L_{CW}$:
\begin{equation}
L = \exp(\frac{\Omega(\bf{A^{\kappa}})}{\Omega(\bf{A^{\ast}})}),
\end{equation}
and,
\begin{equation}
L_{CW} = \exp(\frac{\Omega_{CW}(\bf{B^{\kappa}})}{\Omega_{CW}(\bf{B^{\ast}})}),
\end{equation}
where $\bf{A^{\kappa}} \in \overline{A}, \bf{A^{\kappa}}= \argmin \Omega(\bf{A})$, and $ \bf{B^{\kappa}} \in \overline{B}$, $\bf{B^{\kappa}}= \argmin \Omega_{CW}(\bf{B})$. $L$, $L_{CW}$ correlated with the ratio of the worst value of the social welfare and the maximum value of the social welfare in game $G$ and $G_{CW}$ respectively (we take the exponential as sometime the value of utility function can be negative). These two metrics represent the gap between the system's best possible performance and the worst case scenario. The comparison between $L$,$L_{CW}$ and $\exp(\frac{1}{\mathrm{PoA}})$ and $\exp(\frac{1}{\mathrm{PoA}_{CW}})$ provide\ds{s} some insight of how stable the Nash equilibrium is \ds{across iterations of the game}. Hence, in Table \ref{tab:L}, we illustrate the comparison between the mean value of $L$ and $\exp(\frac{1}{\mathrm{PoA}})$ when different number of WBANs are co-located. Meanwhile, the comparison between $L_{CW}$ and $\exp(\frac{1}{\mathrm{PoA}_{CW}})$ is depicted in in Table \ref{tab:L_CW}.

\begin{table}
\centering
\caption{Comparison of $\exp(\frac{1}{\mathrm{PoA}})$ and the mean value of $L$  }
\begin{tabular}{|c|c|c|}
\hline
   No. of WBANs & $L$ & $\exp(\frac{1}{\mathrm{PoA}})$\\ \hline
 $2$  & 6.832e-32  &2.718		  \\ \hline
 $5$  & 4.252e-32  & 2.704 \\ \hline
 $8$  & 2.291e-32 & 2.562 \\ \hline
 $11$ & 2.728e-32 & 2.549 \\ \hline
 $14$ & 1.141e-32 & 2.542 \\ \hline
\end{tabular}
\label{tab:L}
\end{table}

\begin{table}
\centering
\caption{Comparison of $\exp(\frac{1}{\mathrm{PoA}_{CW}})$ and the mean value of $L_{CW}$ }
\begin{tabular}{|c|c|c|}
\hline
   No. of WBANs & $L_{CW}$ & $\exp(\frac{1}{\mathrm{PoA}_{CW}})$\\ \hline
 $2$  & 2.2445 & 2.7056		  \\ \hline
 $5$  & 2.0317  & 2.7175 \\ \hline
 $8$  & 1.5793 & 2.7183 \\ \hline
 $11$ & 1.5746 & 2.7183 \\ \hline
 $14$ & 1.5864 & 2.7183 \\ \hline
\end{tabular}
\label{tab:L_CW}
\end{table}

Comparing with $\yy{\exp}
(\frac{1}{\mathrm{PoA}})$, $L$ is much more smaller. It is because the social welfare varies a significant amount over the action space. Thus, in Link Adaptation Game, the system is socially stable as the deviation from the social optimum solution is small. On the other hand, in contention game, the values of $L_{CW}$ are relatively large. However, it can be seen that, the $\mathrm{PoA}_{CW}$ is close to $1$, which represent\ds{s complete} stability of \ds{the} game.

\section{Conclusion}
\ds{An} insightful game theory model \ds{has been proposed} to adaptively adjust transmit power and data rate to mitigate inter-WBAN interference level \ds{while reducing} overall \ds{energy} consumption. The model is based on a novel contention-based MAC layer \ds{protocol} with special back-off mechanism, which \ds{reduces} packet collision probability. Besides, another game that can optimize the length of back-off is proposed in order to reduce average delay and increase system throughput. To compare  with some alternative state-of-art approaches, we conducted several simulations \ds{for both empirical and simulated channels}. The simulation results reveal that the proposed methods outperform \ds{state-of-art}, in terms of energy consumption, and overall \ds{quality-of-service (QoS).}

Although, both of the proposed methods have $\mathrm{PoA} \rightarrow 1$ \ds{and are very close to the social optimum, such} optimum is not guaranteed \ds{at the game's} Nash Equilibrium. \ds{If such a guarantee could be provided that would be a further significant advance}. Furthermore, as described above, in the \ds{MAC} layer protocol, the length of back-off period is randomly \ds{chosen}. \ds{However, even} without prediction of the global channel state (channel gains), such randomness can provide efficient collision avoidance. \ds{But more precise allocation} can be made if channel \ds{state} information can be accurately predicted. Thus, \ds{current further work includes} channel prediction, \ds{potentially} improv\ds{ing} overall throughput by offering a \ds{reduction in} packet collision rate. Meanwhile, machine learning, \ds{such as} reinforcement learning\ds{,} is also regarded as \ds{a further potential improvement}.

\begin{appendices}
\section{Proof of Theorem \ref{Theorem:Ult_contention}}\label{app:1}

Firstly, the utility function \eqref{eq:CW_Ult} denoted as:
\begin{equation}
V=d \cdot S-c \cdot D - P_{\text{\yy{Drop}}}
\end{equation}
It can be seen that the first derivative of the utility function is continuous:
\begin{equation}
\begin{aligned}
\yy{d \cdot} \frac{\partial S}{\partial CW} &= \frac{\partial S}{\partial \tau}   \frac{\partial \tau}{\partial CW}\\[10pt]
&= \yy{\frac{\partial \tau}{\partial CW}} \cdot \frac{(1-P)\yy{T_{\text{payload}}}[\yy{T_{\text{slot}}}+\tau(\yy{T_{\text{s}}}-\yy{T_{\text{slot}}})]  - (\yy{T_{\text{s}}}-\yy{T_{\text{slot}})}\tau(1-P)\yy{T_{\text{payload}}}) } {[\yy{T_{\text{slot}}}+\tau(\yy{T_{\text{s}}}-\yy{T_{\text{slot}}})]^2}\\
&= -(aCW +1)^{-2} \frac{\yy{d}(1-P)\yy{T_{\text{payload}}}\yy{T_{\text{slot}}}}{[\yy{T_{\text{slot}}}+\tau(\yy{T_{\text{s}}}-\yy{T_{\text{slot}}})]^2}
\end{aligned}
\end{equation}
Hence, according to \cite{10.2307/1911749}, in \ds{this} contention game, \ds{the} Nash equilibrium exists.
\begin{equation}
\begin{aligned}
\yy{d \cdot} \frac{\partial^2 S}{\partial CW^2} = &
 \frac{2\yy{d}(1-P)\yy{T_{\text{payload}}}\yy{T_{\text{slot}}}([\yy{T_{\text{slot}}}+\tau(\yy{T_{\text{s}}}-\yy{T_{\text{slot}}})]^2)\tau^2(\yy{T_{\text{s}}}-\yy{T_{\text{slot}}})}{[\yy{T_{\text{slot}}}+\tau(\yy{T_{\text{s}}}-\yy{T_{\text{slot}}})]^4}
\\
& + 2\tau^3  \frac{\yy{d}(1-P)\yy{T_{\text{payload}}}\yy{T_{\text{slot}}}}{[\yy{T_{\text{slot}}}+\tau(\yy{T_{\text{s}}}-\yy{T_{\text{slot}}})]^2}
\end{aligned}
\end{equation}
Meanwhile, delay is a linear function of the contention window size, thus $\frac{\partial^2 D}{\partial CW^2} =0$.

\begin{equation}
\frac{\partial^2 P_{\text{Drop}}}{\partial CW^2} = 2\tau^3P^{m+1}\ds{,}
\end{equation}
which should always be positive. Therefore, the second order derivative of the utility function can be obtained as follows:

\begin{equation}
\begin{aligned}
\frac{\partial^2 v}{\partial CW^2} = &-2\tau^3P^{m+1} + 2\tau^3 \frac{\yy{d}(1-P)\yy{T_{\text{payload}}}\yy{T_{\text{slot}}}}{[\yy{T_{\text{slot}}}+\tau(\yy{T_{\text{s}}}-\yy{T_{\text{slot}}})]^2}  \\
& + \frac{2\yy{d}(1-P)\yy{T_{\text{payload}}}\yy{T_{\text{slot}}}([\yy{T_{\text{slot}}}+\tau(\yy{T_{\text{s}}}-\yy{T_{\text{slot}}})]^2)\tau^2(\yy{T_{\text{s}}}-\yy{T_{\text{slot}}})}{[\yy{T_{\text{slot}}}+\tau(\yy{T_{\text{s}}}-\yy{T_{\text{slot}}})]^4}
\end{aligned}
\end{equation}

When $d$ \yy{is set} in a reasonable range, $\frac{\partial^2 v}{\partial CW^2}$ \yy{will always be} less than zero. Hence, the utility function is concave.

\section{Proof of Theorem \ref{Theorem:LMP}}\label{app:2}
In WBAN $i$, for any $x,y,z \in X$, $\lVert x-y \rVert = 2, \lVert x-z \rVert=\lVert z-y \rVert = 1 $, we have
\begin{equation}
\begin{aligned}
F(A_i=(P,R_x),A_{-i}) &= -cP_x^g - q\frac{1}{R_x}
+\frac{\alpha(R_x) \cdot \gamma  ^{\beta(R_x)} }{2}
+  \sum_{j \neq i} -cP_j^w
- q\frac{1}{R_j} +
\frac{\alpha(R_j) \cdot \gamma ^{\beta(R_j)} }{2}
\\
&= C(P_x)+
G(R_x)+
\yy{H(A_i)}
+ Q(A_{-i}),
\end{aligned}
\end{equation}
where $H(A_i) = \yy{H((P,R_x))}  = \frac{\alpha(R_x) \cdot \gamma  ^{\beta(R_x)} }{2} $, $Q(A_{-i})= \sum_{j \neq i} -cP_j^g
- q\frac{1}{R_j} +
\frac{\alpha(R_j) \cdot \gamma ^{\beta(R_j)} }{2}$. Meanwhile, $F(A_z,A_{-i})$ and $F(A_y,A_{-i})$ are expressed in a similar way.

Obviously, $G(R)$ is strictly concave, and increase with $R$. Hence, $2G(R_z)>G(R_x)+G(R_y)$. Also, as shown in table \ref{tab:BPSKTable}, $0>2\alpha(R_z) > \alpha(R_x)+\alpha(R_y)$. When at high $\mathrm{PDR}$ regime, where $\gamma > 0 dB$, as $\beta(R_x)$ increases with $R$ and $\gamma ^ \beta(R_x)>0$ is convex:
\begin{equation}\label{eq:concave_pf_1}
\begin{aligned}
(\gamma ^\beta(R_y) - \gamma ^\beta(R_z))- (\gamma ^\beta(R_z) - \gamma ^\beta(R_x)) > 0\\
\alpha(R_x) \cdot \{(\gamma ^\beta(R_y) - \gamma ^\beta(R_z))- (\gamma ^\beta(R_z) - \gamma ^\beta(R_x))\} < 0\\
\end{aligned}
\end{equation}
Also,
\begin{equation}\label{eq:concave_pf_2}
\begin{aligned}
\gamma ^b(R_X) \cdot \{(\alpha(R_x) - \alpha(R_z))- (\alpha(R_z) - \alpha(R_y))\} < 0 \\
\end{aligned}
\end{equation}
Combining equation \ref{eq:concave_pf_2} and equation \ref{eq:concave_pf_1}:
\begin{equation}
\begin{multlined}
\alpha(R_x) \cdot \{(\gamma ^\beta(R_y) - \gamma ^\beta(R_z))- (\gamma ^\beta(R_z) - \gamma ^\beta(R_x))\} \\+
\gamma ^\beta(R_X) \cdot \{(\alpha(R_x) - \alpha(R_z))- (\alpha(R_z) - \alpha(R_y))\} <0,
\end{multlined}
\end{equation}
and
\begin{equation}
\begin{multlined}
\alpha(R_x) \cdot \{(\gamma ^b(R_y) - \gamma ^b(R_z))- (\gamma ^b(R_z) - \gamma ^b(R_x))\} \\
+
\gamma ^b(R_X) \cdot \{(\alpha(R_x) - \alpha(R_z))- (\alpha(R_z) - \alpha(R_y))\}\\
-(\alpha(R_x) - \alpha(R_z))(\gamma ^b(R_y) - \gamma ^b(R_z))
- (\alpha(R_z) - \alpha(R_y))(\gamma ^b(R_z) - \gamma ^b(R_x)) <0
\end{multlined}
\end{equation}
Thus:
\begin{equation}
\begin{aligned}
\alpha(R_y)\gamma ^\beta(R_y) -2 \alpha(R_z)\gamma ^\beta(R_z) + \alpha(R_x)\gamma ^\beta(R_x) <0
\end{aligned},
\end{equation}

which means $2H(A_i=(P,R_z))\geq H(A_i=(P,R_y))+H(A_i=(P,R_x))$, such that:
\begin{equation}
2F(A_i=(P,R_z),A_{-i}) > F((P,R_x),A_{-i})+F((P,R_y),A_{-i})
\end{equation}
Therefore, the potential function $F$ satisfies Lemma\ref{lemma1}.

\section{Proof of Proposition \ref{Propo:6}}\label{app:3}

Let $x$ satisfy $F((P,R_x),A_{-i}) \geq F((P,R_y),A_{-i})$ for all $y$ with $|x-y|\leq1$. For y with $d=|x-y|\geq2$, we can make a sequence ${x}^d_{k=0}$ such that $x^0=x$ and $x^d=y$ with the following steps:
\begin{equation}
x^{k+1}\in \argmax_{|x-z|=1,|y-z|=d-k-1}  F((P,R_z),A_{-i})
\end{equation}

Suppose $|x^k-z|=|x^{k+2}-z|=1$, then we have $d-k=|x^k-y|=|x^k-z+z-y|\leq |z-y|+1$. Meanwhile, $|z-y| = |z-x^{k+2}+x^{k+2}-y| \leq |z-x^{k+2}|+|x^{k+2}-y| = d-k-1$. Therefore, $d-k-1 = |z-y|$, which gives the following equation:
\begin{equation}
F((P,R_{x^{k+1}}),A_{-i})  = \max_{|x^k-z|=1,|z-y|=d-k-1} F((P,R_{x^{z}}),A_{-i}) \geq \max_{|x^k-z|= x^{k+2}-z|=1} F((P,R_{x^{z}}),A_{-i})
\end{equation}

Since $F$ satisfies LMP as proved above, for $0 \leq k\leq d-2$:

\begin{equation}
 \max_{|x-z|=|z-y| = 1} F((P,R_{x^{k+1}}),A_{-i}) = \left\{
\begin{aligned}
&\geq F((P,R_{x^{k}}),A_{-i}) = F((P,R_{x^{k+2}}),A_{-i})\\&\qquad \qquad \mathrm{,if} \,  F((P,R_{x^k}),A_{-i})= F((P,R_{x^{k+2}}) \\
& > \min\{F((P,R_{x^k}),A_{-i}),F((P,R_{x^{k+2}}),A_{-i})\}   \mathrm{,otherwise}
\end{aligned}
\right.
\end{equation}

Since $|x^0-x^1|=1, F((P,R_{x^0}),A_{-i})\geq F((P,R_{x^1}),A_{-i})$, Also, by using the above properties, for all $k$ we have: $F((P,R_{x^k}),A_{-i}) \geq F((P,R_{x^{k+1}}),A_{-i})$. Thus, by induction we can have $F((P,R_{x}),A_{-i})=F((P,R_{x^0}),A_{-i}) \geq F((P,R_{x^1}),A_{-i}) ... F((P,R_{x^{d-1}}),A_{-i}) \geq F((P,R_{x^d}),A_{-i}) = F((P,R_{y}),A_{-i})$.
\end{appendices}

\bibliographystyle{ACM-Reference-Format}
\bibliography{ms}

\end{document}